\newcommand{\N}{\mathbb{N}}
\newcommand{\Z}{\mathbb{Z}}
\renewcommand{\odot}{\odot}
\newcommand{\sh}{\mathsf{sh}}
\newcommand{\dom}{\mathsf{dom}}
\newcommand{\rhs}{\mathsf{rhs}}
\def \<#1>{{\langle {#1} \rangle}}
\newcommand{\comp}{\mathbin{;}}
\newcommand{\arity}{\text{rank}}
\newcommand{\fin}{\text{fin}}
\begin{document}

\title{Shape Preserving Tree Transducers}

\author{Paul Gallot\inst{1}\orcidID{0000-0001-6742-4885}
\and
Sebastian Maneth\inst{1}}

\institute{Department of Mathematics and Informatics, University of Bremen}

\maketitle

\begin{abstract}
It is shown that shape preservation is decidable for
top-down tree transducers, bottom-up tree transducers, and for
compositions of total deterministic macro tree transducers.
Moreover, if a transducer is shape preserving, then it can be
brought into a particular normal form, where every input node
creates exactly one output node.
\end{abstract}

\section{Introduction}

Tree transducers are a well established formalism for describing translations
on finite ordered ranked trees. They were invented in the 1970s in the context of
compilers and mathematical linguistics.
An important type of problem that has been of continuing interest 
are so called ``definability problems'':
given two classes $X$ and $Y$ of translations we want to know whether or not
it is decidable for any $\tau\in X$, whether $\tau$ is definable in
the class $Y$, i.e., whether or not $\tau\in Y$. If such decidability holds
then we say that ``$Y$ is decidable within $X$''.
Knowing that $\tau\in Y$ may be beneficial for several reasons.
For instance elements from $Y$ may be more efficient to implement than
elements from $X$; or, $Y$ may enjoy better closure properties than $X$.

Examples of decidable definability problems on strings are:
the class of regular languages is decidable within the class
of deterministic context-free languages~\cite{DBLP:journals/jacm/Valiant75} and
the class of one-way string transductions is
decidable within the class of functional two-way 
transductions~\cite{DBLP:conf/lics/FiliotGRS13,DBLP:journals/lmcs/BaschenisGMP18}.
For tree transducers:
the class of macro tree translations~\cite{DBLP:journals/jcss/EngelfrietV85} of linear size increase is decidable
within the class of macro tree translations~\cite{DBLP:journals/siamcomp/EngelfrietM03} 
(in fact, even within the composition closure of the latter~\cite{DBLP:journals/acta/EngelfrietIM21});
this result has recently been generalized to linear height increase and to linear
size-to-height increase~\cite{DBLP:conf/icalp/GallotM0P24}.
It was shown that linear deterministic top-down tree translations are decidable
within top-down tree translations~\cite{DBLP:journals/ijfcs/ManethSV23}.

In this paper we study \emph{shape preservation}: a tree translation is shape preserving
if the shape of every output tree coincides with the shape of the corresponding input tree.
Thus, the output and input trees have the same nodes and edges, but may only differ in the labels of
the nodes. It was shown by F{\"u}l{\"o}p and Gazdag~\cite{DBLP:journals/tcs/FulopG03} that the class of shape preserving top-down
tree translations is decidable within the class of top-down tree 
translations~\cite{DBLP:journals/mst/Rounds70,DBLP:journals/jcss/Thatcher70}.
They also show that this class can be realized by ``relabling TOPs'' (see below).
Their proofs go through a series of technically involved constructions and lemmas.

In this paper we give an alternative proof of the decidability of shape preservation
for top-down tree transducers (TOPs). It is based on the facts that
(\emph{i})~functionality of TOPs is decidable and that
(\emph{ii})~equivalence of functional TOPs is decidable. Both were proven originally by 
\'{E}sik~\cite{DBLP:journals/actaC/Esik81}.
The idea is extremely simple: for a given TOP, we change the labels of all output nodes to
a fixed label. The resulting TOP must be functional, if the given one is shape preserving.
Moreover, it must be equivalent to a fixed shape preserving transducer that only depends on the domain
of the given transducer. This ``proof scheme'' can readily be applied to other classes:
e.g., we can show that shape preservation for bottom-up tree transducers
is decidable. In fact, we can even show that shape preservation for (compositions of)
total deterministic macro tree transducers is decidable.

Next, we want to show that every shape preserving TOP can
be realized by a relabeling TOP, i.e., one where every rule is of the form
$q(f(x_1,\dots x_k))\to f'(q_1(x_1),\dots, q_k(x_k))$.
This was already shown by F{\"u}l{\"o}p
and Gazdag, however only for \emph{linear TOPs}.
It can be tempting to assume that every shape preserving TOP 
is linear, but it is not true:
\[
\begin{array}{lcllcl}
q_0(a(x_1)) &\to& a(f(q_1(x_1), q_2(x_1)))\qquad\qquad & 
q_1(f(x_1,x_2)) &\to& q_0(x_1)\\
q_0(e) &\to& e &
q_2(f(x_1,x_2)) &\to& q_0(x_2) 
\end{array}
\]
This TOP is shape preserving (it realizes the identity on trees with paths
of odd length and consisting of alternating $a$- and $f$-nodes and a final $e$-node).
Its first rule is \emph{not} linear: 
the input variable $x_1$ appears more than once in the right-hand side.
The TOP is also \emph{not} nondeleting (not every $x_i$ from a 
left-hand side appears in the right-hand side).
Lemma~3.1 of~\cite{DBLP:journals/tcs/FulopG03} states that every shape preserving TOP is nondeleting.
However, this lemma is true only for linear TOPs and is 
invalidated by the example above for non-linear ones.
To fix this omission, we prove that for every shape preserving TOP, an
equivalent linear TOP can be constructed.
Together with the results of~\cite{DBLP:journals/tcs/FulopG03} this establishes that every
shape preserving TOP is effectively equivalent to a relabeling TOP.

Let us now consider shape preservation for
total deterministic macro tree transducers (MTTs).
The property is decidable, because linear size increase of (compositions of) MTTs
is decidable, and in the affirmative case we are in a class
of translations that has decidable equivalence~\cite{DBLP:journals/ipl/EngelfrietM06}.
A question that remains: is there a normal form for MTTs that is
akin to relabeling TOPs?
Consider the following example of an MTT.
It takes as input trees of the form $\$(t)$ where $t$ is a monadic
tree over unary labels $a$, $b$, and $c$ (and the nullary label $e$).
\[
\begin{array}{lcll}
q_0(\$(x_1)) &\to& \$(q_a(x_1, q_b(x_1, q_c(x_1, e))))\\
q_\delta(\gamma(x_1),y) &\to& \delta(q_\delta(x_1, y)) &\text{for }\delta,\gamma\in\{a,b,c\}\text{ with }\delta=\gamma\\
q_\delta(\gamma(x_1),y) &\to& q_\delta(x_1, y) &\text{for }\delta,\gamma\in\{a,b,c\}\text{ with }\delta\not=\gamma\\
q(e,y) &\to& y &\text{for }q\in\{q_0,q_a,q_b,q_c\}
\end{array}
\]
For instance, the input tree $\$(c(a(b(a(a(b(e)))))))$ is translated to the
output tree $\$(a(a(a(b(b(c(e)))))))$.
Our corresponding normal form of ``relabeling MTT'' requires that each input node
produces exactly one output node; but this output node may be anywhere in the rule,
including a ``parameter position''.

\section{Preliminaries}

The set $\{0,1,\dots \}$ of natural numbers is denoted by $\N$.
For $k\in\N$ we denote by $[k]$ the set
$\{1,\dots,k\}$; thus $[0]=\emptyset$.
We fix the set $X=\{x_1,x_2,\dots \}$ of variables
and the set $Y=\{y_1,y_2,\dots\}$ of parameters and assume these
sets to be disjoint from all other alphabets and sets.
For $k\geq 1$ let $X_k=\{x_1,\dots,x_k\}$ (and similarly for $Y$).
A ranked alphabet (set) consists of an alphabet (set) $\Sigma$ together
with a mapping $\text{rank}_{\Sigma}: \Sigma\to\N$
that assigns to each symbol $\sigma\in\Sigma$ a natural number called its ``rank''.
By $\Sigma^{(k)}$ we denote the symbols of $\Sigma$ that have rank $k$.

The set $T_\Sigma$ of (finite, ranked, ordered) trees over $\Sigma$ is the smallest
set of strings $S$ such that if $\sigma\in\Sigma^{(k)}$, $k\geq 0$, and
$s_1,\dots,s_k\in S$, then also $\sigma(s_1,\dots,s_k)\in S$.
We write $\sigma$ instead of $\sigma()$.
Let $A$ be a set that is disjoint from $\Sigma$. Then the set $T_\Sigma(A)$ of
trees over $\Sigma$ indexed by $A$ is defined as $T_{\Sigma'}$ where
$\Sigma'=\Sigma\cup A$ and $\text{rank}_{\Sigma'}(a)=0$ for $a\in A$ and 
$\text{rank}_{\Sigma'}(\sigma)=\text{rank}_{\Sigma}(\sigma)$ for $\sigma\in\Sigma$.
For a tree $s=\sigma(s_1,\dots,s_k)$ with $\sigma\in\Sigma^{(k)}$, $k\geq 0$,
and $s_1,\dots,s_k\in T_\Sigma$, we define 
the set $V(s)\subseteq\N^*$ of nodes of $s$ as
$\{\varepsilon\}\cup \{ iu\mid i\in[k], u\in V(s_i)\}$;
thus, nodes are strings over positive integers,
where $\varepsilon$ denotes the root node of $s$, and
for a node $u$, $ui$ denotes the $i$-th child of $u$.
For $u\in V(s)$ we denote by $s[u]$ the label of $u$ in $s$ and
by $s/u$ the subtree rooted at $u$.
We denote the prefix order on paths by $\leq$. 
Formally, let $s=\sigma(s_1,\dots,s_k)$ and define 
$s[\epsilon]=\sigma$,
$s[iu]=s_i[u]$, 
$s/\varepsilon=s$, and
$s/iu=s_i/u$ for $\sigma\in\Sigma^{(k)}$, $k\geq 0$,
$s_1,\dots,s_k\in T_\Sigma$, $i\geq 1$ and $u\in V(s_i)$ such that $iu\in V(s)$.
For trees $s,t$ and $u\in V(s)$ we denote by
$s[u\leftarrow t]$ the tree obtained from $s$ by replacing its subtree
rooted at $u$ by the tree $t$.
Similarly, for trees $t_1,\dots t_k$ we denote by
$s[x_i\leftarrow t_i]_{i\leq k}$ the tree obtained from $s$ by replacing
each occurrence of $x_i$ by $t_i$.

For each $n \in \N$ we fix the shape alphabet $\Psi_n = \{\#_0, \#_1, \dots, \#_n\}$ with $\arity(\#_i) = i$ for all $i \leq n$ and, for each alphabet $\Sigma$ of maximal arity $n$, the shape function $\sh_\Sigma : T_\Sigma \to T_{\Psi_n}$ on $\Sigma$ inductively defined by, for all $k \geq 0$, $\sigma \in \Sigma^{(k)}$ and trees $t_1, \dots, t_k \in T_\Sigma$, $\sh_\Sigma(\sigma(t_1, \dots, t_k)) = \#_k(\sh_\Sigma(t_1), \dots, \sh_\Sigma(t_k))$. 
We note $SH$ the set of all shape functions. 

The domain of a relation $f$ is noted $\dom(f)$.
The composition of relations is noted $\comp$ and is defined so that $(x,z) \in f \comp g$ when $\exists y \in \dom(g), (x,y) \in f$ and $(y,z) \in g$. 
A tree relation $f \subseteq T_\Sigma \times T_\Delta$ over alphabets $\Sigma$ and $\Delta$ is \emph{shape preserving} if $f \comp \sh_\Delta \subseteq \sh_\Sigma$.
\textbf{Convention:} All inclusions of transducer classes mentioned in this paper are meant effective.

A \emph{top-down tree transducer} is a system $M = (Q, \Sigma, \Delta, q_0, R)$, where
$Q$ is a finite set of \emph{states},
$\Sigma$ and $\Delta$ are the \emph{input and output ranked alphabets}, 
$q_0\in Q$ is the \emph{initial state}, and $R$ is a finite set of \emph{rules} of the form:
$q(\sigma(x_1,\dots, x_k))\to\rhs$,
$\sigma\in\Sigma^{(k)}$, $k\geq 0$,
$q\in Q$, and $\rhs \in T_\Delta(\<Q,X_k>)$, with $\<Q,X_k> = \{ q'(x) \mid q' \in Q, x \in X_k\}$. 
For every state $q \in Q$ and tree $t \in T_\Sigma(X)$,
$q(t)$ is called a \emph{state call} of $q$ on $t$.
The set of state calls of $M$ is written $\<Q,T_\Sigma(X)>$.
%
For a rule $r$ of the form $q(\sigma(x_1 , \dots, x_k)) \to \rhs$ in $R$
and tree $t \in T_\Delta(\<Q,T_\Sigma(X)>)$ such that $t/u = q(\sigma(t_1 , \dots, t_k))$
for some $u\in V(t)$ and trees $t_1, \dots, t_k \in T_\Sigma(X)$, we write $t \xrightarrow{r,u} t[u \leftarrow \rhs[x_i \leftarrow t_i]_{i \leq k}]$.

A run of a state $q\in Q$ on a tree $t\in T_\Sigma(X)$ is of the form
$q(t)\xrightarrow{r_1,u_1} t_1\dots \xrightarrow{r_n,u_n} t_n$.
We also write $t\xrightarrow{q} t_n$ or $t\to^* t$ and call $t_n$
the \emph{output} of the run. If $t_n\in T_\Delta$ then the run
is called \emph{valid}.
Let $M_q = \{ (t,t') \mid$ there is a valid run of $q\text{ on }t\text{ with output } t' \}$. 
The \emph{translation realized by $M$} is $M_{q_0}$. 
%
%
For each valid run $r$ on $t$ with output $t'$ and for each $u \in V(t)$,
there exists two runs $t[u \leftarrow x_1] \to^* t_1$ and
$t_1[x_1 \leftarrow t/u] \to^* t'$ with $t_1 \in T_\Delta(\<Q,\{x_1\}>)$
The first such run is called \emph{provisional run}
and $t_1$ the \emph{provisional output} of $\xi$ at path $u$.
The sequence of states appearing in $t_1$ (in pre-order)
is called \emph{state sequence} of $r$ at $u$. 
Provisional runs only contain one variable $x_1$, we usually write it $x$ to avoid
confusion with variables $x_1, \dots, x_k$ used in the rules of $M$. 

\begin{example}
Consider the TOP from the Introduction and number its rules
from left to right and then from top to bottom.
Then
\begin{multline*}
q_0(a(f(e,e)))\xrightarrow{r_1,\varepsilon}
a(f(q_1(f(e,e)),q_2(f(e,e))))\xrightarrow{r_2,11}
a(f(q_0(e),q_2(f(e,e))))\\
\xrightarrow{r_4,12}
a(f(q_0(e),q_0(e)))\xrightarrow{r_3,1}
a(f(e,q_0(e)))\xrightarrow{r_3,2}
a(f(e,e)).
\end{multline*}
\end{example}

Let $\text{ID}$ denote the class of all total identity functions on $T_\Sigma$, for
any ranked alphabet $\Sigma$

\begin{lemma}\label{lem:id}
All mappings in $\text{ID}$ and in \text{SH} can be realized
by total deterministic nondeleting and linear top-down tree transducers.
\end{lemma}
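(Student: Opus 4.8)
The plan is to exhibit explicit transducers for both families of mappings. Fix a ranked alphabet $\Sigma$ of maximal arity $n$. For the identity I would take the single-state transducer $M_{\text{id}} = (\{q_0\}, \Sigma, \Sigma, q_0, R)$ whose rule set $R$ contains, for every $k \geq 0$ and every $\sigma \in \Sigma^{(k)}$, exactly the rule
\[
q_0(\sigma(x_1, \dots, x_k)) \to \sigma(q_0(x_1), \dots, q_0(x_k)).
\]
All three required properties are then immediate by inspection of the rule shape: there is exactly one rule for each pair $(q_0, \sigma)$, so $M_{\text{id}}$ is total and deterministic; and each variable $x_i$ occurs exactly once in every right-hand side, so it is simultaneously linear and nondeleting.

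To verify that $M_{\text{id}}$ realizes the identity I would argue by structural induction on the input $t = \sigma(t_1, \dots, t_k)$. The unique applicable first step rewrites $q_0(t)$ to $\sigma(q_0(t_1), \dots, q_0(t_k))$; by the induction hypothesis each state call $q_0(t_i)$ has the unique valid output $t_i$, so the whole run produces $t$. Hence $(M_{\text{id}})_{q_0}$ is the total identity on $T_\Sigma$.

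For the shape functions the construction is identical except that the output symbol is relabeled according to its rank. I would take $M_{\sh} = (\{q_0\}, \Sigma, \Psi_n, q_0, R')$ with a rule
\[
q_0(\sigma(x_1, \dots, x_k)) \to \#_k(q_0(x_1), \dots, q_0(x_k))
\]
for each $\sigma \in \Sigma^{(k)}$. Since $\arity(\#_k) = k$ and $k \leq n$, every right-hand side is a well-formed tree over $\Psi_n$, and the same inspection shows $M_{\sh}$ is total deterministic, linear, and nondeleting. The analogous induction, now driven by the defining equation $\sh_\Sigma(\sigma(t_1, \dots, t_k)) = \#_k(\sh_\Sigma(t_1), \dots, \sh_\Sigma(t_k))$, gives $(M_{\sh})_{q_0} = \sh_\Sigma$. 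Carrying out these two constructions for every ranked alphabet $\Sigma$ covers all of $\text{ID}$ and $\text{SH}$. I expect no genuine obstacle: the constructions are uniform, and the only points requiring care are the routine validity check in the induction and the observation on ranks that guarantees well-formedness of the shape rules.
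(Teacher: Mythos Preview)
Your proposal is correct and follows essentially the same approach as the paper: both construct a single-state transducer with the rules $q(\sigma(x_1,\dots,x_k))\to\sigma(q(x_1),\dots,q(x_k))$ for the identity and $q(\sigma(x_1,\dots,x_k))\to\#_k(q(x_1),\dots,q(x_k))$ for the shape function. You simply spell out the induction and the verification of totality, determinism, linearity, and nondeletion that the paper leaves implicit.
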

\begin{proof}
Let $\Sigma$ be a ranked alphabet and let $n$ be the
maximal rank of symbols in $\Sigma$.
For $i\in\{1,2\}$ define
$M_i=(\{q\},\Sigma,\Delta_i,q,R_i)$ where
$\Delta_1=\Sigma$ and $\Delta_2=\Psi_n$.
For every $\sigma\in\Sigma^{(k)}$ with $k\geq 0$ let
$q(\sigma(x_1,\dots,x_k))\to\sigma(q(x_1),\dots,q(x_k))$
be in $R_1$ and let
$q(\sigma(x_1,\dots,x_k))\to\#_k(q(x_1),\dots,q(x_k))$ be in $R_2$.
Then $M_1$ realizes the total identity on $T_\Sigma$ and
$M_2$ realizes $\sh_\Sigma$.
\qed
\end{proof}

\section{Decidability of Shape Preservation}

Let ${\cal C}$ be a class of relations on trees, i.e., for any
$c\in {\cal C}$, $c\subseteq T_\Sigma\times T_\Delta$ for some
ranked alphabets $\Sigma$ and $\Delta$.
For $c\in{\cal C}$ let $\text{dom}(c)$ denote the domain of $c$ and
let $\text{DOM}(c)$ denote this domain, seen as a partial identity function,
i.e., $\text{DOM}(c)=\{(s,s)\mid s\in\text{dom}(c)\}$.
We denote by $\text{DOM}({\cal C})$ the class of domains of ${\cal C}$,
seen as partial identity functions,
$\text{DOM}({\cal C}) = \bigcup_{c\in{\cal C}} \text{DOM}(c)$.

\begin{theorem}\label{the:decide_sp}
Let ${\cal C}$ be a class of relations on trees.
Shape preservation is decidable for every element of ${\cal C}$ if:  
\begin{enumerate}
\item $\text{ID}\subseteq {\cal C}$,
\item ${\cal C} \comp \text{SH} \subseteq {\cal C}$, 
\item $\text{DOM}({\cal C}) \comp {\cal C} \subseteq {\cal C}$,
\item functionality is decidable for any $c\in{\cal C}$, and
\item equivalence is decidable for all functions in $C$.
\end{enumerate}
\end{theorem}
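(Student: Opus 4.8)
The plan is to reduce the shape-preservation question for a given $c\in{\cal C}$ to a single equivalence check between two functions in ${\cal C}$, exactly along the ``proof scheme'' sketched in the introduction. The guiding observation is that $c$ is shape preserving iff, after forgetting all output labels (i.e.\ composing with $\sh_\Delta$), the resulting relation behaves exactly like the canonical shape map restricted to the domain of $c$. Concretely, I would consider the two relations $c \comp \sh_\Delta$ and $\text{DOM}(c) \comp \sh_\Sigma$, and argue that $c$ is shape preserving if and only if these two relations coincide (as functions). The conditions of the theorem are precisely what is needed to make each of these a legitimate member of ${\cal C}$ on which functionality and equivalence are decidable.

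First I would unfold the definition of shape preservation: $c$ is shape preserving means $c\comp\sh_\Delta \subseteq \sh_\Sigma$. Since $\sh_\Sigma$ is a total function (Lemma~\ref{lem:id}) and every pair in $c\comp\sh_\Delta$ has its first component in $\dom(c)$, the inclusion $c\comp\sh_\Delta\subseteq\sh_\Sigma$ is equivalent to the equality $c\comp\sh_\Delta = \text{DOM}(c)\comp\sh_\Sigma$. Indeed, the right-hand side is just $\sh_\Sigma$ restricted to $\dom(c)$, which is a total function on that domain; an inclusion of a relation into a function whose domain it already covers forces the two to be equal. I would spell out both directions of this equivalence: if $c$ is shape preserving then for every $(s,s')\in c$ we have $\sh_\Delta(s')=\sh_\Sigma(s)$, giving both inclusions; conversely equality immediately yields $c\comp\sh_\Delta\subseteq\sh_\Sigma$.

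Next I would check that both sides of the target equality lie in ${\cal C}$ and are functions, so that hypotheses~(4) and~(5) apply. The relation $c\comp\sh_\Delta$ is in ${\cal C}$ by hypothesis~(2). The relation $\text{DOM}(c)\comp\sh_\Sigma$ equals $\text{DOM}(c)\comp(\text{id}\comp\sh_\Sigma)$; using $\text{ID}\subseteq{\cal C}$ from~(1) together with~(3) we get $\text{DOM}(c)\comp\text{ID}\in{\cal C}$, and one further application of~(2), after noting $\sh_\Sigma\in\text{SH}$, places $\text{DOM}(c)\comp\sh_\Sigma$ in ${\cal C}$. (Here $\text{DOM}(c)$ is the domain of $c$ seen as a partial identity, and $\text{DOM}(c)\comp\sh_\Sigma$ is manifestly the restriction of the function $\sh_\Sigma$ to $\dom(c)$, hence a function.) The relation $\text{DOM}(c)\comp\sh_\Sigma$ is a function by construction; whether $c\comp\sh_\Delta$ is a function is not given a priori, so I would first run the decidable functionality test of~(4) on it: if it is not functional, then $c$ cannot be shape preserving (a shape-preserving $c$ forces $c\comp\sh_\Delta$ to agree with the function $\sh_\Sigma$ on $\dom(c)$, hence to be functional), and we output ``no''. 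If it is functional, we invoke the decidable equivalence test of~(5) on the two functions $c\comp\sh_\Delta$ and $\text{DOM}(c)\comp\sh_\Sigma$, and return ``yes'' exactly when they are equivalent.

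The delicate point I expect to require the most care is the bookkeeping around partiality and the precise statement of the equivalence in Step~1: one must verify that comparing the two relations as \emph{functions} (via hypothesis~(5)) is sound even though $c\comp\sh_\Delta$ is only shown to be functional after passing the Step~(4) test, and that the domains match. In particular I would make sure that $\dom(c\comp\sh_\Delta)=\dom(c)=\dom(\text{DOM}(c)\comp\sh_\Sigma)$, so that equivalence of the two as functions is genuinely the same as the relational equality used to characterize shape preservation. Everything else is a routine chaining of the closure hypotheses~(1)--(3) to certify membership in ${\cal C}$; the conceptual content is entirely in the clean reformulation ``$c$ shape preserving $\iff$ $c\comp\sh_\Delta$ and $\sh_\Sigma{\restriction}\dom(c)$ coincide.''
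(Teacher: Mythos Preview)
Your proposal is correct and follows essentially the same approach as the paper: form $d=c\comp\sh_\Delta$ and $e=\text{DOM}(c)\comp\sh_\Sigma$, use (4) to test functionality of $d$ (answering ``no'' if it fails), and otherwise use (5) to test $d=e$. The only cosmetic difference is in how membership of $e$ in ${\cal C}$ is derived---the paper first shows $\sh_\Sigma\in{\cal C}$ via (1)+(2) and then applies (3), whereas you first show $\text{DOM}(c)\in{\cal C}$ via (1)+(3) and then apply (2); both are fine.
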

\begin{proof}
Let $c\in{\cal C}$ and let $\Sigma,\Delta$ be ranked alphabets
(of minimal size) such that $c\subseteq T_\Sigma\times T_\Delta$.
Define $d = c \comp \sh_\Delta$.
By Property~(2), $d\in{\cal C}$.
By Property~(4) it is decidable whether or not $d$ is a function.
If $d$ is not a function, then $c$ is not shape preserving.
If $d$ is a function, then let $e=\text{DOM}(c) \comp \sh_{\Sigma}$.
It follows from Properties~(1) and (2) that $\sh_{\Sigma}\in{\cal C}$
and hence $e\in{\cal C}$ by Property~(3).
Clearly, $c$ is shape preserving if and only if
$d$ is equivalent to $e$; the latter is decidable by Property~(5).
\qed
\end{proof}

Let $\text{FTA}$ denote all partial identity functions on regular tree languages.
Let us now consider the class ${\cal C}=\text{FTA} \comp \text{TOP}$.
By Lemma~\ref{lem:id}, $\text{ID}\subseteq \text{TOP}$ and
therefore $\text{ID}\subseteq{\cal C}$.
Hence, Property~(1) is satisfied.
Since every element of $\text{SH}$ can be realized by a linear and
nondeleting top-down tree transducer,
it follows from Theorem~1 of~\cite{DBLP:journals/iandc/Baker79b} that Property~(2) is
satisfied.
Property~(3) follows from the fact that FTA is closed under composition
(Lemma~3.4 of~\cite{DBLP:journals/mst/Engelfriet75}).
Property~(4) follows from the proof of Theorem~8 of~\cite{DBLP:journals/actaC/Esik81},
where it is shown that functionality of ${\cal C}$ is decidable.
By the (unique) Theorem of~\cite{DBLP:journals/ipl/Engelfriet78} the restriction of $\text{TOP}$
to functions is included in the class of deterministic top-down
tree transducers with regular look-ahead. Equivalence of the latter
class is decidable (see~\cite{DBLP:journals/jcss/EngelfrietMS09}), so Property~5 is satisfied.

\begin{corollary}\label{cor:TOP}
Shape preservation is decidable for top-down tree translations (restricted
to arbitary regular input tree languages).
\end{corollary}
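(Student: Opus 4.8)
The plan is to verify that the class $\mathcal{C}=\mathrm{FTA}\comp\mathrm{TOP}$ satisfies all five hypotheses of Theorem~\ref{the:decide_sp}, since Corollary~\ref{cor:TOP} is precisely the instantiation of that theorem to this class (the restriction to ``arbitrary regular input tree languages'' being exactly what the leading $\mathrm{FTA}$ factor provides). Each of the five properties reduces to a known closure or decidability fact, so the task is to cite the right result for each and check that the citations genuinely apply to $\mathrm{FTA}\comp\mathrm{TOP}$ rather than to bare $\mathrm{TOP}$.

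\emph{Property~(1).} By Lemma~\ref{lem:id} every total identity function is realized by a (deterministic, linear, nondeleting) top-down transducer, so $\mathrm{ID}\subseteq\mathrm{TOP}$. Since each $c\in\mathrm{TOP}$ equals $\mathrm{DOM}(T_\Sigma)\comp c$ with the full domain in $\mathrm{FTA}$, we get $\mathrm{TOP}\subseteq\mathrm{FTA}\comp\mathrm{TOP}=\mathcal{C}$, hence $\mathrm{ID}\subseteq\mathcal{C}$.

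\emph{Property~(2), $\mathcal{C}\comp\mathrm{SH}\subseteq\mathcal{C}$.} By Lemma~\ref{lem:id} every shape function lies in the linear nondeleting fragment of $\mathrm{TOP}$. The relevant composition fact is that $\mathrm{TOP}$ is closed under post-composition with linear nondeleting top-down transductions; I would invoke Theorem~1 of~\cite{DBLP:journals/iandc/Baker79b}, which guarantees $\mathrm{TOP}\comp(\text{linear nondeleting }\mathrm{TOP})\subseteq\mathrm{TOP}$. The $\mathrm{FTA}$ prefix is untouched by post-composition, so $\mathcal{C}\comp\mathrm{SH}=\mathrm{FTA}\comp(\mathrm{TOP}\comp\mathrm{SH})\subseteq\mathrm{FTA}\comp\mathrm{TOP}=\mathcal{C}$.

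\emph{Properties~(3)--(5).} For (3), note $\mathrm{DOM}(\mathcal{C})\subseteq\mathrm{FTA}$ because the domain of any top-down transduction (even composed with an $\mathrm{FTA}$ prefix) is a regular tree language; combined with closure of $\mathrm{FTA}$ under composition (Lemma~3.4 of~\cite{DBLP:journals/mst/Engelfriet75}) this gives $\mathrm{DOM}(\mathcal{C})\comp\mathcal{C}\subseteq\mathrm{FTA}\comp\mathrm{FTA}\comp\mathrm{TOP}\subseteq\mathrm{FTA}\comp\mathrm{TOP}=\mathcal{C}$. For (4), decidability of functionality for $\mathcal{C}$ follows from \'{E}sik's argument (the proof of Theorem~8 of~\cite{DBLP:journals/actaC/Esik81}), which already handles the regular-restriction of the domain. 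For (5), a functional element of $\mathcal{C}$ is a deterministic top-down transduction with regular look-ahead (by the theorem of~\cite{DBLP:journals/ipl/Engelfriet78}, absorbing the $\mathrm{FTA}$ prefix into the look-ahead), and equivalence of such transducers is decidable~\cite{DBLP:journals/jcss/EngelfrietMS09}. With all five properties established, Theorem~\ref{the:decide_sp} yields the corollary.

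\emph{Main obstacle.} The routine steps are the two closure arguments (1) and (3); the genuinely delicate point is making sure Properties~(4) and (5) survive the presence of the $\mathrm{FTA}$ prefix, i.e.\ that \'{E}sik's functionality decidability and the determinization-into-look-ahead result are not about plain $\mathrm{TOP}$ but really cover $\mathrm{FTA}\comp\mathrm{TOP}$. I expect the cleanest route is to observe that a leading regular partial identity can always be folded into top-down regular look-ahead, so that $\mathrm{FTA}\comp\mathrm{TOP}$ collapses into top-down transducers with look-ahead; the cited decidability results then apply verbatim, which is the key technical check the proof must make explicit.
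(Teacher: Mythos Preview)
Your proposal is correct and mirrors the paper's own argument essentially line for line: the paper also instantiates Theorem~\ref{the:decide_sp} with $\mathcal{C}=\mathrm{FTA}\comp\mathrm{TOP}$ and verifies the five properties using exactly the citations you name (Lemma~\ref{lem:id} for (1), Baker~\cite{DBLP:journals/iandc/Baker79b} for (2), Lemma~3.4 of~\cite{DBLP:journals/mst/Engelfriet75} for (3), the proof of Theorem~8 of~\cite{DBLP:journals/actaC/Esik81} for (4), and~\cite{DBLP:journals/ipl/Engelfriet78} together with~\cite{DBLP:journals/jcss/EngelfrietMS09} for (5)). Your additional remarks on absorbing the $\mathrm{FTA}$ prefix into look-ahead are sound elaborations but not a different route.
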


Note that for two tree relations $r_1,r_2$ such that $r_1$ is shape preserving
it holds that $r_1\comp r_2$ is shape preserving if and only if $r_2$ is 
shape preserving on the range of $r_1$.
Top-down tree transducers \emph{with regular look-ahead}
are included in the composition of deterministic bottom-up finite state relabelings and 
$\text{TOP}$ by Theorem~2.6 of~\cite{DBLP:journals/mst/Engelfriet77}.
Such relabelings are shape preserving and their ranges are effectively regular.
Thus, decidability of shape preservation can be reduced from top-down tree transducers with
regular look-ahead to ${\cal C}=\text{FTA} \comp \text{TOP}$, which is solved
by Corollary~\ref{cor:TOP}.
Similarly, by Theorem~3.15 of~\cite{DBLP:journals/mst/Engelfriet75}, \emph{bottom-up tree translations} are included
in compositions of bottom-up finite state relabelings and tree homomorphisms; since tree homomorphisms are
particular TOPs, decidability of shape preservation again reduces to that of ${\cal C}$.

\begin{corollary}
Shape preservation is decidable for (1)~top-down tree translations with regular
look-ahead and for (2)~bottom-up tree translations. 
\end{corollary}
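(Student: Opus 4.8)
The plan is to reduce both cases to Corollary~\ref{cor:TOP}, i.e.\ to shape preservation of ${\cal C}=\text{FTA}\comp\text{TOP}$, by combining the decomposition results cited just before the statement with the compositional observation noted there. First I would fix a translation $\tau$ in either class and apply the relevant structural theorem to write $\tau=\rho\comp\tau'$, where $\rho$ is a bottom-up finite state relabeling and $\tau'\in\text{TOP}$. For case~(1) this is Theorem~2.6 of~\cite{DBLP:journals/mst/Engelfriet77} (with $\rho$ deterministic); for case~(2) it is Theorem~3.15 of~\cite{DBLP:journals/mst/Engelfriet75}, where $\tau'$ is even a tree homomorphism and hence a particular TOP. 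In both cases the decomposition is effective, $\rho$ is shape preserving, and its range $R=\text{range}(\rho)$ is an effectively computable regular tree language.

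Next I would invoke the observation that, since $\rho$ is shape preserving, $\tau=\rho\comp\tau'$ is shape preserving if and only if $\tau'$ is shape preserving on $R$. Restricting $\tau'$ to inputs in $R$ is achieved by precomposing $\tau'$ with the partial identity on the regular language $R$, which is an element of FTA; the resulting relation therefore lies in $\text{FTA}\comp\text{TOP}={\cal C}$. Shape preservation of this relation is decidable by Corollary~\ref{cor:TOP}, and since it is equivalent to shape preservation of $\tau$, this decides the latter.

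The correctness of the iff observation is immediate and I would include it for completeness: for any $(s,s'')\in\rho\comp\tau'$ with intermediate witness $s'\in R$, shape preservation of $\rho$ gives $\sh(s)=\sh(s')$ (on the respective alphabets), so $\sh(s)=\sh(s'')$ holds exactly when $\sh(s')=\sh(s'')$; moreover every pair of the restriction $\tau'|_R$ arises from some such $s$, since $R=\text{range}(\rho)$. The only points demanding care are the effectiveness claims---constructing the decomposition, realizing $R$ as a finite tree automaton, and forming the restricted relation inside ${\cal C}$---all of which are supplied by the cited theorems together with closure of FTA under composition (Lemma~3.4 of~\cite{DBLP:journals/mst/Engelfriet75}). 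I do not anticipate a genuine mathematical obstacle beyond this bookkeeping, as the substantive decidability content is already isolated in Corollary~\ref{cor:TOP}.
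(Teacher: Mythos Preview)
Your proposal is correct and follows essentially the same route as the paper: decompose via the cited theorems into a (shape preserving) bottom-up finite state relabeling followed by a TOP, use the observation that shape preservation of the composition is equivalent to shape preservation of the second factor on the (regular) range of the first, and then invoke Corollary~\ref{cor:TOP}. Your added one-line justification of the ``iff'' is a nice touch; the paper simply asserts it.
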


Let $\text{MTT}$ denote the class of total deterministic macro tree transducers (full definition in section~\ref{sec:mtt_normalization})
and let $\text{LMTT}$ denote the restriction of $\text{MTT}$ to
linear size increase (LSI).
Since every shape preserving translation is of LSI,
we first check the LSI property.
This is possible even for compositions of MTTs and 
in the affirmative case we obtain one equivalent LMTT,
according to Theorems~44 and~43 of~\cite{DBLP:journals/acta/EngelfrietIM21}.
That Theorem~44 states that one ``dTTsu'' transducer can be constructed.
This is a ``single-use tree-walking transducer (with regular look-around)'',
which is the same as ``single use restricted attributed tree transducer with look-ahead'';
according to Theorem~7.1 of~\cite{DBLP:journals/iandc/EngelfrietM99} 
they are the same as (deterministic) MSO definable tree transductions
which are the same as LMTTs by~Theorem~7.1 of~\cite{DBLP:journals/siamcomp/EngelfrietM03}.

We consider the class ${\cal C} = \text{FTA} \comp \text{LMTT}$.
These are exactly the partial deterministic MSO tree translations, because
$\text{LMTT}$ coincides with the total deterministic MSO tree translations (see above)
and the domains of partial MSO tree translations are exactly the regular
tree languages (because an explicit
closed ``MSO domain formula'' defines the domain and MSO definable tree languages
are exactly the regular tree languages~\cite{DBLP:journals/jcss/Doner70,DBLP:journals/mst/ThatcherW68}).
Since every total deterministic TOP is also an MTT, and mappings in ID are LSI,
it follows from Lemma~\ref{lem:id} that Property~(1) of Theorem~\ref{the:decide_sp} holds.
Property~(2) follows from the fact that (deterministic) MSO definable tree translations
are closed under composition by Proposition~3.2 of~\cite{DBLP:journals/tcs/Courcelle94}.
Property~(3) holds because $\text{FTA}$ is closed under composition (see above).
Property~(4) is void, because ${\cal C}$ contains functions only.
Equivalence of $\cal C$ is decidable~\cite{DBLP:journals/ipl/EngelfrietM06} thus
giving us Property~(5).

\begin{corollary}
Shape preservation is decidable for compositions of total deterministic macro
tree transducers.
\end{corollary}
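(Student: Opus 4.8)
The plan is to reduce the problem to a single application of the general criterion in Theorem~\ref{the:decide_sp}, instantiated with the class ${\cal C} = \text{FTA} \comp \text{LMTT}$ of partial deterministic MSO tree translations set up in the preceding discussion. The key preliminary observation is that shape preservation forces linear size increase: if $\tau$ is shape preserving then $\sh_\Delta(\tau(s)) = \sh_\Sigma(s)$ for every $s \in \dom(\tau)$, so output and input have the same number of nodes and $\tau$ is trivially of LSI.

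Accordingly, the first step is a filter. Given a composition $\tau$ of total deterministic MTTs, I would decide whether $\tau$ is of LSI; this is decidable by Theorems~44 and~43 of~\cite{DBLP:journals/acta/EngelfrietIM21}. If $\tau$ is not of LSI, then by the observation above it cannot be shape preserving and the procedure answers ``no.'' In the affirmative case the same theorems yield, effectively, one equivalent LMTT. By the chain of equivalences recorded above (LMTT $=$ total deterministic MSO definable tree translations) this realizes $\tau$ as an element of ${\cal C}$: the total identity on the regular language $T_\Sigma$ lies in $\text{FTA}$, whence $\text{LMTT} \subseteq \text{FTA} \comp \text{LMTT}$.

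It then remains to discharge the five hypotheses of Theorem~\ref{the:decide_sp} for ${\cal C}$, exactly as argued just before the statement. Property~(1) holds because ID mappings are LSI total deterministic TOPs, hence LMTTs (Lemma~\ref{lem:id}); Property~(2) because deterministic MSO translations are closed under composition and $\text{SH} \subseteq \text{LMTT}$; Property~(3) because $\text{FTA}$ is closed under composition, so $\text{DOM}({\cal C}) \comp {\cal C} = \text{FTA} \comp \text{FTA} \comp \text{LMTT} = {\cal C}$; Property~(4) vacuously, since ${\cal C}$ consists of functions only; and Property~(5) because equivalence in ${\cal C}$ is decidable~\cite{DBLP:journals/ipl/EngelfrietM06}. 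Applying Theorem~\ref{the:decide_sp} to $\tau \in {\cal C}$ then decides shape preservation of $\tau$.

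The genuine difficulty lies not in any individual step but in assembling a class ${\cal C}$ that simultaneously contains the given translation (once it is known to be of LSI), is closed under post-composition with $\text{SH}$ and pre-composition with regular domain restrictions, and has decidable equivalence. The linchpin is passing through the MSO characterization of LMTTs: it is the MSO viewpoint that delivers the composition closure needed for Properties~(2) and~(3) and, together with the regularity of MSO domains, the decidable equivalence needed for Property~(5). I expect the main obstacle to be verifying that the reduction to a single LMTT via~\cite{DBLP:journals/acta/EngelfrietIM21} is both effective and lands inside ${\cal C}$; everything downstream is a bookkeeping check of the five properties.
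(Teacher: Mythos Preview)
Your proposal is correct and follows essentially the same route as the paper: first test LSI (and answer ``no'' if it fails), then convert the composition into a single LMTT via~\cite{DBLP:journals/acta/EngelfrietIM21}, and finally apply Theorem~\ref{the:decide_sp} to the class ${\cal C}=\text{FTA}\comp\text{LMTT}$ after checking the five properties with the same references. The only cosmetic difference is that in Property~(3) you should write $\text{DOM}({\cal C})\comp{\cal C}\subseteq\text{FTA}\comp\text{FTA}\comp\text{LMTT}$ rather than an equality (you need that domains of partial MSO translations are regular, which the paper notes), but the argument is otherwise identical.
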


\newcommand{\A}{\mathcal{A}}

\section{Normalization of \emph{non-linear} shape preserving TOPs}

The paper~\cite{DBLP:journals/tcs/FulopG03} provides a normalization algorithm for linear TOPs, i.e.\ showing how to transform a shape preserving linear TOP into an equivalent nondeterministic top-down relabeling. 
We only provide here a way to transform a non-linear shape preserving TOP into an equivalent linear shape preserving TOP. 
Our construction relies on determining which states of a TOP can only produce a bounded number of outputs, computing these outputs using a form of look-ahead automaton, and proving that other state calls on the same input subtree can be merged into one single state call. 

Throughout this section we will use as running example the non-linear shape preserving TOP $M_0$ with set of states $Q_0 = \{q_0,q_1,q_2,q_3,q_\ell\}$, both input and output alphabet $\Sigma = \{f,g,h,a,b\}$, and the following rules:
\begin{align*}
q_0(h(x)) &\to q_1(x) ~~~&~~~ q_1(h(x)) &\to h(h(f(q_2(x), q_3(x)))) \\
q_2(f(x_1,x_2)) &\to f(q_2(x_1), q_\ell(x_2)) ~~~&~~~ q_3(f(x_1, x_2)) &\to q_3(x_1) \\
q_2(g(x_1, x_2)) &\to q_\ell(x_2) ~~~&~~~ q_3(g(x_1, x_2)) &\to q_\ell(x_1) \\
q_\ell(a) &\to a ~~~&~~~ q_\ell(b) &\to b
\end{align*}

We assume given a non-linear shape preserving TOP $M$ on tree alphabets $(\Sigma,\Delta)$, and we assume all states and rules which do not appear in a valid run of $M$ have been removed. 

As hinted at in our TOP example $M_0$, there is a one-to-one correspondence between leaves in the input and leaves in the output:

\begin{lemma}\label{lem:leafs}
If $t \xrightarrow{M} t'$ with $t\in T_\Sigma$ and $t' \in T_\Delta$ then each input leaf is processed by exactly one state call, which itself produces an output tree containing exactly one output leaf. 
\end{lemma}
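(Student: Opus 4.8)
The plan is to count output leaves by the input node at which they are created and to match this count against the input leaves via shape preservation. Fix the given valid run of $M$ on $t$ with output $t'$, and associate to each input node $u\in V(t)$ two nonnegative integers: the length $m(u)$ of the state sequence at $u$ (the number of state calls that process $u$), and the number $\lambda(u)$ of output leaves created at $u$, i.e.\ the number of nullary $\Delta$-symbols occurring as leaves in the right-hand sides applied by the state calls on $u$. Since every node of $t'$ is placed by exactly one rule application and a nullary $\Delta$-symbol in a right-hand side remains a leaf of $t'$, bookkeeping gives $\leafs(t')=\sum_{u\in V(t)}\lambda(u)$. Shape preservation yields $\sh_\Delta(t')=\sh_\Sigma(t)$, hence $\leafs(t')=\leafs(t)=N$, where $N$ denotes the number of input leaves.

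First I would establish the auxiliary fact that every input node, and in particular every input leaf, is processed at least once, that is $m(u)\ge 1$ for all $u$; I expect this to be the main obstacle. Suppose $m(u)=0$ for some $u$. Then the subtree $t/u$ is never read during the run, since it could reach the output only through a state call on $u$, of which there are none. Consequently, replacing $t/u$ by any other tree $s\in T_\Sigma$ leaves every rule application of the run unchanged, so the very same run is valid on $\hat t=t[u\leftarrow s]$ and still outputs $t'$. Choosing $s$ with $\sh_\Sigma(s)\neq\sh_\Sigma(t/u)$ (possible except in the degenerate case where $T_\Sigma$ contains trees of only one shape, for which the statement is trivial) we obtain $\hat t\xrightarrow{M}t'$, whence $\sh_\Sigma(\hat t)=\sh_\Delta(t')=\sh_\Sigma(t)$; but $\hat t$ and $t$ differ in shape at $u$, a contradiction. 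The delicate point is precisely the claim that the run remains valid and unchanged on $\hat t$, which is where $m(u)=0$ is used.

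With $m(u)\ge 1$ in hand, the statement follows from a tight counting chain. For a leaf $u$ the symbol $t[u]$ has rank $0$, so every right-hand side applied to $u$ is a ground tree in $T_\Delta$ and hence has at least one leaf; therefore $\lambda(u)=\sum_{p}\leafs(\rhs_p)\ge m(u)$, where $p$ ranges over the state calls on $u$, with equality exactly when each call outputs a tree with a single leaf. For internal $u$ we use only $\lambda(u)\ge 0$. Combining these,
\[
N=\leafs(t')=\sum_{u\in V(t)}\lambda(u)\ \ge\ \sum_{u\text{ a leaf}}\lambda(u)\ \ge\ \sum_{u\text{ a leaf}}m(u)\ \ge\ N,
\]
so all inequalities are equalities. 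The first forces $\lambda(u)=0$ for every internal $u$ (no leaf is created at an internal node); the last, together with $m(u)\ge 1$, forces $m(u)=1$ for every leaf $u$; and the middle one then forces $\lambda(u)=m(u)=1$, i.e.\ the unique state call on each input leaf produces an output tree with exactly one leaf, which is precisely the claim.
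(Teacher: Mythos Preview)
Your proof is correct and takes essentially the same route as the paper: a substitution argument shows every input leaf must be processed, and then leaf-counting via shape preservation forces the exact multiplicities. Your write-up is more explicit---packaging everything into a single chain of inequalities and handling the contribution of internal nodes along the way, which the paper only records as a corollary after the lemma---but the underlying argument is the same.
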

\begin{proof}
If a leaf is not processed by any state of $M$, then it can be replaced with a larger tree without changing the output, which contradicts the shape preserving property. 
Each state call to a leaf must produce at least one leaf in the output, so each input leaf produces at least one output leaf. 
The shape preserving implies that the number of leaves in the input is the same as in the output, so each input leaf produces exactly one output leaf. \qed
\end{proof}

Lemma~\ref{lem:leafs} has a number of interesting implications.
For instance, it implies that any rule on a non-leaf input node cannot
produce any leaf; if it did, then the lemma implies that the output tree
has more leaves than the input tree -- a contradiction.
It also implies that no state can be called twice on the same input node in a valid run.
Indeed, if the same state appeared twice in a state sequence, then both state calls could have
the same run on their input. In that run, at least one output leaf is produced when
processing an input leaf $u$. This would imply that the input leaf $u$ is processed twice,
thus contradicting the lemma.

\begin{lemma}\label{lem:state_sets}
For each state sequence $q_1, \dots, q_n$ of $M$ and $1 \leq i < j \leq n$: $q_i \neq q_j$. 
\end{lemma}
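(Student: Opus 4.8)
The plan is to argue by contradiction, exploiting Lemma~\ref{lem:leafs} together with the fact that, in a top-down transducer, the expansion of a state call depends only on the state and on the input subtree it is applied to. Fix a valid run $r$ of $M$ on an input $t \in T_\Sigma$ with output $t' \in T_\Delta$, and a path $u \in V(t)$ whose state sequence is $q_1, \dots, q_n$; thus in the continuation $t_1[x \leftarrow t/u] \to^* t'$ each $q_m(x)$ becomes a state call $q_m(t/u)$ on the \emph{common} subtree $t/u$. Suppose, for contradiction, that $q_i = q_j = q$ for some $1 \leq i < j \leq n$.

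First I would isolate one valid run of $q$ on $t/u$. Since every state and rule of $M$ that does not occur in a valid run has been removed, the $i$-th copy is expanded in $r$ by some run $\rho$ of $q$ on $t/u$ whose output lies in $T_\Delta$. I then build a new run $r'$ of $M$ on $t$ that agrees with $r$ on the provisional part up to $u$, uses $\rho$ to expand the $i$-th copy, and replaces the expansion of the $j$-th copy by a second, identical instance of $\rho$. Because a top-down state call is expanded independently of its context, $r'$ is again a valid run of $M$ on $t$, with some output $t'' \in T_\Delta$. As $M$ is shape preserving we have $\sh_\Delta(t'') = \sh_\Sigma(t)$, so Lemma~\ref{lem:leafs} applies verbatim to $r'$.

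Finally I would derive the contradiction from a leaf count. The output of $\rho$ is a nonempty tree, hence contains at least one leaf; by the observation following Lemma~\ref{lem:leafs}, a rule producing an output leaf can only be applied at an input leaf, so $\rho$ processes some leaf $w$ of $t/u$. In $r'$ both the $i$-th and the $j$-th copies run $\rho$ on $t/u$, and therefore the single input leaf $uw$ of $t$ is processed by two distinct state calls. This contradicts Lemma~\ref{lem:leafs}, which asserts that each input leaf is processed by exactly one state call. Hence $q_i \neq q_j$.

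I expect the delicate point to be the construction of $r'$: one must check that swapping the sub-run attached to the $j$-th copy for a copy of $\rho$ really yields a well-formed valid run, i.e.\ that the expansion of a state call $q(t/u)$ in a TOP run is determined solely by $q$ and $t/u$ and is independent of the other copies. This is the standard compositional structure of top-down runs, mirrored by the provisional-run decomposition recalled in the Preliminaries, but it is exactly where non-linearity, and hence the possibility of two copies behaving differently in $r$, must be neutralised before the leaf-counting argument of Lemma~\ref{lem:leafs} can be applied.
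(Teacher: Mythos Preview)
Your proposal is correct and follows essentially the same approach as the paper's proof: assume $q_i=q_j$, replace the subrun attached to the $j$-th copy by the subrun used for the $i$-th copy, and observe that the resulting valid run processes some input leaf of $t/u$ twice, contradicting Lemma~\ref{lem:leafs}. The paper argues slightly more tersely but uses the same substitution of subruns and the same leaf-counting contradiction; your explicit discussion of why the substitution yields a well-formed run is a welcome clarification of a step the paper leaves implicit.
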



This allows us to talk about \emph{state sets} of $M$ instead of state sequences of $M$. 
We denote by $S_Q$ the set of \emph{state sets} of $M$, i.e.\ $S_Q = \{ \{q_1, \dots, q_n\} \subseteq Q \mid q_1, \dots, q_n \text{ is a state sequence of } M \}$. Note that the set $S_Q$ is computable from $M$. 
For each $S \in S_Q$ we define $I(S) = \bigcap_{q \in S} \dom(M_q)$ and $\fin(S)$ as the set of states $q \in S$ such that $M_q(I(S))$ is a finite set of trees. For each $q \in Q$ the domain $\dom(M_q)$ is regular and so is $I(S)$. For each $S \in S_Q$, by deciding the finiteness of ranges of TOPs~\cite{DBLP:journals/iandc/DrewesE98}
we can compute $\fin(S)$ and $M_q(I(S))$ for all $q \in \fin(S)$. 
Finally, for every $S \in S_Q$, every $q \in \fin(S)$ and every $t \in M_q(I(S))$, the tree language $M_q^{-1}(\{t\})$ is regular and we can compute a nondeterministic top-down tree automaton for this language. 

We compute this on example $M_0$: $S_Q = \{ \{q_0\}, \{q_1\}, \{q_2,q_3\}, \{q_\ell\} \}$, 
with $\fin(\{q_0\}) = \emptyset$, $\fin(\{q_1\}) = \emptyset$, $\fin(\{q_2,q_3\}) = \{q_3\}$ and $\fin(\{q_\ell\}) = \{q_\ell\}$. We have $M_{0q_3}(I(\{q_2,q_3\})) = \{a,b\}$ and $M_{0q_\ell}(I(\{q_\ell\})) = \{a,b\}$. 

For every provisional run $t[u \leftarrow x] \to t'[u_i \leftarrow q_i(x)]_{i \leq n}$ and for every $i \leq n$,
we say that $u$ is a \emph{origin} of the state call $q_i(x)$ at path $u_i$,
and we say that this state call has \emph{shape distance} $n$, if $n$ is
the distance in the tree $t$ between the origin $u$ and the node at path $u_i$. 

We can then show that the shape distance is bounded except for state calls of states in $\fin(S)$. We can see that the exception of states in $\fin(S)$ is necessary on the example $M_0$: 

\vspace{-4mm}
\begin{center}
\begin{tikzpicture}
	[level distance=7mm,sibling distance=8mm]
	\small
	\node at (0,0) {$q_0(~ h ~)~~$}
	child {node {$h$}
		child {node {$f$}
			child {node {$f$}
				child {node {$x$}}
				child {node {$b$}}
			}
			child {node {$a$}}
		}
	};
	
	\node at (2,-1
	) {$\xrightarrow{M_0}$};
	
	\node at (3.9,0) {$h$}
	child {node {$h$}
		child {node {$f$}
			child[sibling distance=12mm] {node {$f$}
				child[sibling distance=8mm] {node {$f$}
					child[sibling distance=10mm] {node {$q_2(x)~~$}}
					child[sibling distance=10mm] {node {$b$}}
				}
				child[sibling distance=8mm] {node {$a$}}
			}
			child[sibling distance=12mm] {node {$q_3(x)~$}}
		}
	};
\end{tikzpicture}
\end{center}

\vspace{-4mm}
Here we can see that the state call $q_3(x)$ is far from its origin $u = 1111$, and indeed we can make state calls $q_3(x)$ arbitrarily far from each other by lengthening the string of $f$ nodes in the input. On the other hand, $q_2(x)$ is always one step below its origin. 

Formally, we prove a bound on the shape distance for state calls of states not in $\fin(S)$, and we show that the path $u'$ to such a state call in the output is either a prefix of its origin $u$, or $u$ is a prefix of $u'$. 

\begin{lemma}\label{lem:origin_distance}
For every input tree $t$, run $r$ of $M$ on $t$ and its provisional output $t[u \leftarrow x] \xrightarrow{M} t'[u_i \leftarrow q_i(x)]_{i \leq n}$ at a path $u \in V(t)$, noting $S$ the state set of run $r$ at path $u$, for every $i$ such that $q_i \notin \fin(S)$ we have $u$ and $u_i$ are comparable wrt.\ the prefix ordering and the corresponding shape distance is $\leq\max(\rhs)\cdot 2^{|Q|}$
where $\max(\rhs)$ is the maximum height of the right-hand side of a rule of $M$. 
\end{lemma}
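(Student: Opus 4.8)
The plan is to reduce the statement to a single numerical quantity and bound it by a pumping argument on state sets. Fix the run $r$, the origin $u$ at depth $m=|u|$, and a state $q_i\notin\fin(S)$ whose provisional call sits at output position $u_i$. Write the path from the root to $u$ as $\varepsilon=v_0<v_1<\dots<v_m=u$, and for each $k$ let $S_k$ be the state set of $r$ at $v_k$ (so $S_0=\{q_0\}$ and $S_m=S$). Tracing $q_i$ backwards through the rules applied on this path gives a lineage $q_0=p_0,p_1,\dots,p_m=q_i$ with $p_k\in S_k$, and the position accumulates as $u_i=\rho_0\rho_1\cdots\rho_{m-1}$, where $\rho_k$ is the position of the call $p_{k+1}(x)$ inside the right-hand side used at $v_k$; hence $|\rho_k|\le\max(\rhs)$ and $|u_i|=\sum_k|\rho_k|$. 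Setting $\pi_k=\rho_0\cdots\rho_{k-1}$, define the \emph{drift} $D_k=|\pi_k|-k$, so that $D_0=0$, each step changes $D$ by $|\rho_k|-1\in[-1,\max(\rhs)-1]$, and $D_m=|u_i|-|u|$. Once comparability is known, the shape distance is exactly $|D_m|$, so it suffices to prove comparability and the bound $|D_m|\le\max(\rhs)\cdot 2^{|Q|}$ separately.

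For comparability, suppose $u$ and $u_i$ were incomparable. The context $t[u\leftarrow x]$, the provisional output, and the position $u_i$ do not depend on the subtree placed at $u$. Hence for every $s\in I(S)$ the run on $t[u\leftarrow s]$ is valid and, since $M$ is shape preserving, $\sh_\Delta(M_{q_i}(s))=\sh_\Sigma(t[u\leftarrow s])/u_i=\sh_\Sigma(t)/u_i$, because incomparability of $u$ and $u_i$ makes the subtree at $u_i$ independent of $s$. Thus every tree in $M_{q_i}(I(S))$ has one fixed shape; as $\Delta$ is finite there are only finitely many trees of a given shape, so $M_{q_i}(I(S))$ is finite, i.e.\ $q_i\in\fin(S)$ — a contradiction. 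Therefore $u$ and $u_i$ are comparable, the cases $u\le u_i$ and $u_i\le u$ corresponding to $D_m\ge 0$ and $D_m\le 0$.

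For the bound, the idea is that inserting or deleting a loop between two equal state sets leaves $D_m$ unchanged, so $|D_m|$ equals the drift along the loop-erased walk, which visits at most $2^{|Q|}$ distinct state sets, hence uses at most $2^{|Q|}-1$ steps each of size $\le\max(\rhs)$. The point to establish is that a segment $[v_a,v_b]$ with $S_a=S_b$ contributes zero net drift $\delta$ to the lineage of $q_i$. Since the boundary state sets coincide, the part of $t$ between $v_a$ and $v_b$ is a context that can be iterated, and because the state set is reproduced identically at the bottom of each copy the whole run stays valid after pumping, the origin still carrying $S$ (so $q_i\in S$ and $q_i\notin\fin(S)$ persist, keeping comparability available). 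If $\delta\neq 0$, iterating $k$ times drives the drift of the lineage reaching $q_i$ to $D_m+(k-1)\delta$, whose absolute value tends to infinity. Fixing one $s\in I(S)$ at the origin, comparability then forces the call of $q_i$ either arbitrarily far below the origin ($\delta>0$) or arbitrarily far above it ($\delta<0$). In the first case shape preservation gives $\sh_\Delta(M_{q_i}(s))=\sh_\Sigma(s)/w$ for a suffix $w$ with $|w|\to\infty$, impossible once $|w|$ exceeds the height of the fixed tree $s$; in the second case the same identity forces $M_{q_i}(s)$ to contain $\sh_\Sigma(s)$ at depth $|w|\to\infty$, contradicting that $M_{q_i}(s)$ is a single fixed finite tree. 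Hence $\delta=0$, and loop-erasure yields $|D_m|\le(2^{|Q|}-1)\max(\rhs)\le\max(\rhs)\cdot 2^{|Q|}$.

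The main obstacle is the pumping step: it must preserve both the validity of the \emph{entire} run and the identity of the distinguished non-$\fin$ state at the origin. Validity forces pumping on repeated state \emph{sets} rather than on repeated lineage states, which is precisely what yields the factor $2^{|Q|}$; but a state set can be reproduced while the lineage of $q_i$ switches states inside the segment, so establishing that $q_i$ itself reappears at the origin with accumulated drift exactly $(k-1)\delta$ requires tracking the backward ``parent'' map induced by the context on the reproduced state set. The needed facts are that this map is well defined (each state in a state sequence has a unique producing occurrence, by Lemma~\ref{lem:state_sets}) and that it preserves non-finiteness, so that $q_i$ lies on a cycle of non-$\fin$ states along which, by the argument above, the net drift must vanish.
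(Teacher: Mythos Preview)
Your comparability argument is correct and essentially the same as the paper's. The gap is in the bound.

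The pumping step does not do what you claim. You assert that iterating the context $C_2$ between $v_a$ and $v_b$ (with $S_a=S_b$) drives the drift to $D_m+(k-1)\delta$. That identity holds only when the lineage state repeats, i.e.\ $p_a=p_b$. In general the backward parent map $\phi:S_b\to S_a$ need not fix $p_b$: after inserting one more copy of $C_2$, the lineage of $q_i$ at level $v_a$ becomes $\phi(p_a)$, not $p_a$, and the portion of the lineage from the root to $v_a$ (with its own drift) changes accordingly. So the accumulated drift is
\[
\hat D_k \;=\; D^{(0)}_a\!\bigl(\phi^{k-1}(p_a)\bigr)\;+\;\sum_{j=0}^{k-1}\delta\!\bigl(\phi^{j}(p_b)\bigr)\;+\;(D_m-D_b),
\]
which is governed by the \emph{orbit} of $p_b$ under $\phi$, not by the single number $\delta=\delta(p_b)$. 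Your unboundedness argument then yields only that the cycle sum $\sum_{j}\delta(\phi^{j}(p_b))$ over one period vanishes; it does \emph{not} give $\delta=0$ for the particular segment $[v_a,v_b]$, which is what your loop-erasure step needs. The last paragraph of the proposal names the issue but does not close it: ``net drift along the cycle vanishes'' is strictly weaker than ``every $S_a=S_b$ segment has zero drift''. A correct rescue would pump on pairs $(S_k,p_k)$ (so that $p_a=p_b$ is guaranteed), but then the loop-erased walk has up to $\sum_{S}|S|$ vertices rather than $2^{|Q|}$, and you no longer reach the stated bound $\max(\rhs)\cdot 2^{|Q|}$.

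The paper's route avoids all of this by pumping \emph{below} the origin rather than above it: pick a single tree $t_1\in I(S)$ of height at most $2^{|Q|}$ (pump down any longer witness along a repeated state set), plug it at $u$, and read off $|v|$ directly from the equality $\sh(t[u\leftarrow t_1])=\sh(t'[u'\leftarrow t'_1])$ using comparability. No lineage bookkeeping, no parent map, and the constant $\max(\rhs)\cdot 2^{|Q|}$ falls out immediately from the height of $t_1$ and of $t'_1$.
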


With this lemma we can merge state calls on copies of input subtrees:

\begin{theorem}\label{the:copy-removal}
For every non-linear shape preserving TOP $M$, we can compute a shape preserving linear TOP $M'$ equivalent to $M$. 
\end{theorem}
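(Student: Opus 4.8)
The plan is to let $M'$ process every input subtree with a single state, namely the \emph{set} $S$ of all states of $M$ that act on that subtree; this is well defined because, by Lemma~\ref{lem:state_sets}, the states pending on a common subtree are pairwise distinct. Linearity is then automatic: a bundle rule will issue exactly one recursive call per child. The obstruction is that the outputs of the various $q\in S$ are produced at scattered positions of the global output, so one rule cannot replay them in place. Lemma~\ref{lem:origin_distance} is exactly the tool that tames this scattering: it separates $S$ into the states of $\fin(S)$, whose outputs are finite but may sit arbitrarily far from the origin, and the states outside $\fin(S)$, whose pending slots all stay comparable to the origin and within shape distance $\max(\rhs)\cdot 2^{|Q|}$ of it.

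For the non-$\fin$ states I would exploit that all their slots live in a bounded window around the origin. The bundle state of $M'$ carries, as finite data, the portion of output already committed inside this window together with the positions of the pending calls; descending one level at a time, a rule overlays the simultaneously chosen right-hand sides of all non-$\fin$ states of $S$ into the window, and for each child it spawns the single successor bundle consisting of the union of the states that those right-hand sides send into that child. Because the window has bounded height and bounded branching, only finitely many committed-window configurations occur, so this part of $M'$ stays finite; and since the overlay is required to come from one genuine run of $M$, $M'$ simply guesses a compatible combination of $M$-rules at each step.

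For the $\fin$ states I would detach them by guess-and-verify. Both $M_q(I(S))$ and a nondeterministic top-down recogniser for each $M_q^{-1}(\{t\})$ are computable (recalled above). At the moment a bundle rule first creates the far-away slot of such a call, $M'$ guesses the finite tree $t\in M_q(I(S))$ to appear there, emits $t$ as a constant, and folds the recogniser for $M_q^{-1}(\{t\})$ into the state descending into that subtree; a wrong guess simply dead-ends, so the check costs no output node and shape preservation is untouched. The crucial point that makes the folding legal is that the merged state visits the union of the leaves visited by the members of $S$, so it owns an output position at every node it reaches and never needs an empty right-hand side in order to run a verification.

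It then remains to check that $M'$ is linear (one call per child by construction), shape preserving and equivalent to $M$: each surviving run of $M'$ mirrors a run of $M$ whose guesses are correct, and the output tree is reorganised but never altered. The main obstacle is finiteness of the bundle states. This is precisely where the explicit bound of Lemma~\ref{lem:origin_distance} is indispensable, as it caps the window and hence the number of committed-window configurations, while the finiteness of each $M_q(I(S))$ and the bound $|Q|$ on the size of a bundle limit the live guesses and recogniser states. A secondary point is that a guess planted at a slot's creation must be carried, possibly along an unbounded path, down to the node where the corresponding finite computation is resolved; since the carried datum ranges over a fixed finite set, this costs only finitely many states.
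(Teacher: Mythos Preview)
Your proposal is essentially the same construction as the paper's. Both rely on the $\fin$/non-$\fin$ split, handle the $\fin$ states by nondeterministically guessing one of the finitely many possible outputs and verifying the guess with a (product) top-down automaton folded into the state, and handle the non-$\fin$ states by carrying a bounded ``window'' of committed output around the origin, bounded via Lemma~\ref{lem:origin_distance}. The paper's state $(p,v,t)$ is exactly your ``committed-window configuration plus live recogniser state'': $p$ is the product-automaton state, $t$ is the local output tree with the pending non-$\fin$ calls at their positions, and $v$ records the offset when the call sits strictly \emph{above} the origin (the case $u=u'v$ in Lemma~\ref{lem:origin_distance}); this last detail is the one piece your window language only covers implicitly, and it is what forces the three rule families $R_+$, $R_\varepsilon$, $R_-$ in the paper rather than a single uniform overlay step.
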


The formal construction and its full proof are in Appendix. We show here the construction on an example and give intuition for it. 
The idea of this construction is to use nondeterminism to guess the outputs of state calls whose set of possible outputs is finite (states $q \in \fin(S)$ where $S$ is the state set of the current input node). The reverse images of possible outputs of such states are regular, so we can check them using a nondeterministic top-down automaton, which we can simulate with our TOP since it has to visit every input node (according to Lemma~\ref{lem:leafs}). 
We then use Lemma~\ref{lem:origin_distance} to merge remaining state calls on copied input subtrees. Indeed Lemma~\ref{lem:origin_distance} implies that for each input subtree $t$, there is a subtree $t'$ in the output which contains all state calls $q(t)$ on input $t$ with $q \in S \setminus \fin(S)$, and there is a bound on the depth of $q(t)$ in $t'$. So all such state calls can be merged into one state call which outputs $t'$. 

\paragraph{The construction on example TOP $M_0$:}
Note that the output of state $q_3$ is always either $a$ or $b$. So, for state set $S = \{q_2, q_3\}$, $\fin(S) = \{q_3\}$ and we can check the output of $q_3$ by checking a regular property on the input. Similarly, for $S' = \{q_\ell\}$, we have $\fin(S_\ell) = \{q_\ell\}$. 

We now build the linear TOP $M'$ equivalent to $M_0$. States of $M'$ are of the form $(\phi, v, t)$, with the state call $(\phi, v, t)(t_0)$ meaning: 
\begin{itemize}
\item $\phi$ is a regular property checked on $t_0$, it allows us to produce the output of state calls of $M_0$ on $t_0$ whose output is bounded (states in $\fin(S)$), 
\item $t$ 
is an output subtree containing all state calls of $M_0$ on $t_0$, except for states in $\fin(S)$, 
\item $v$ is a path such that, if $t_1[u \leftarrow x] \xrightarrow{M'} t'_1[u' \leftarrow (\phi, v, t)(x)]$, then $u = u'\,v$. 
\end{itemize}
To put it simply, for a state call at path $u'$ with origin at path $u$, if $u$ is a prefix of $u'$ then $v = \varepsilon$ and $t$ gives the subtree at path $u$ in the provisional output; but if $u'$ is a prefix of $u$ then $u = u'\,v$ and $t=q(x)$ where $q(x)$ is the only state call with origin $u$. 
Here are some of the rules for $M'$:
\[
\begin{array}{lcl}
(\emptyset, \varepsilon, q_0(x))(h(x_1)) &\to& (\emptyset,1,q_1(x))(x_1)\\
(\emptyset, 1, q_1(x))(h(x_1)) &\to&h(h((q_3(x)=a,\varepsilon,f(q_2(x),a))(x_1)\\
(\emptyset, 1, q_1(x))(h(x_1)) &\to& h(h((q_3(x)=b,\varepsilon,f(q_2(x),b))(x_1)\\
(q_3(x)=a,\varepsilon,f(q_2(x),a))(f(x_1,x_2)) &\to& f((q_3(x)=a,\varepsilon,f(q_2(x),b))(x_1),\\
&&(q_l(x)=b,\varepsilon,a)(x_2))
\end{array}
\]



\begin{theorem}
For every shape preserving TOP $M$, we can construct
an equivalent (nondeterministic) top-down relabeling equivalent to $M$. 
\end{theorem}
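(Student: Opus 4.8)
The plan is to reduce the final statement directly to results already established, so that essentially no new construction is needed. The statement asserts that every shape preserving TOP $M$ admits an equivalent nondeterministic top-down relabeling. By Theorem~\ref{the:copy-removal} we already know that from any non-linear shape preserving $M$ we can compute an equivalent \emph{linear} shape preserving TOP $M'$. The paper also recalls (in the Introduction and at the start of Section~4) that F\"ul\"op and Gazdag~\cite{DBLP:journals/tcs/FulopG03} give a normalization turning any shape preserving \emph{linear} TOP into an equivalent top-down relabeling. So the whole statement should follow by composing these two facts: first linearize, then apply the known linear-case normalization.

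Concretely, I would argue as follows. Let $M$ be an arbitrary shape preserving TOP. If $M$ is already linear, apply the F\"ul\"op--Gazdag construction and we are done. Otherwise, invoke Theorem~\ref{the:copy-removal} to obtain a linear TOP $M'$ with $M'_{q_0}=M_{q_0}$ that is again shape preserving. Since $M'$ is linear and shape preserving, it falls into the scope of the F\"ul\"op--Gazdag normalization, which produces an equivalent (nondeterministic) top-down relabeling $M''$. Equivalence is transitive, so $M''$ is equivalent to the original $M$. Because each step is effective -- Theorem~\ref{the:copy-removal} is an explicit computation and the linear-case result is likewise constructive -- the whole transformation is effective, matching the paper's standing convention that all class inclusions are meant effective.

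The one point that genuinely requires care, and which the authors have already flagged as the gap in the prior work, is \emph{why $M'$ is shape preserving and not merely equivalent to $M$}. The F\"ul\"op--Gazdag machinery relies on their Lemma~3.1 (every shape preserving TOP is nondeleting), which the excerpt explicitly notes is valid only for linear TOPs. So the linearity produced by Theorem~\ref{the:copy-removal} is exactly what is needed to make that machinery applicable: once $M'$ is linear and shape preserving, Lemma~3.1 holds for it and the downstream construction goes through unobstructed. I would therefore emphasize that shape preservation is inherited by $M'$ (which is part of the statement of Theorem~\ref{the:copy-removal}, since $M'$ realizes the same translation as $M$ and $M$ is shape preserving), and that this combination of linearity and shape preservation is precisely the hypothesis under which the cited normalization is correct.

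The main obstacle is thus not a new argument but the bookkeeping of \emph{which} earlier result supplies \emph{which} property: I must be explicit that Theorem~\ref{the:copy-removal} delivers both linearity and shape preservation, and that the F\"ul\"op--Gazdag result consumes exactly those two hypotheses. A secondary subtlety is the domain restriction: if the shape preserving translations being normalized are implicitly taken over arbitrary regular input languages (as in Corollary~\ref{cor:TOP}), I would check that the linear-case normalization and Theorem~\ref{the:copy-removal} both respect that restriction, or otherwise restrict attention to total-domain TOPs where the composition is cleanest. Assuming the domains match up, the proof is a short two-step composition, and I expect the write-up to be only a few sentences invoking Theorem~\ref{the:copy-removal} followed by the result of~\cite{DBLP:journals/tcs/FulopG03}.
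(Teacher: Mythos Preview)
Your proposal is correct and matches the paper's own proof essentially verbatim: apply Theorem~\ref{the:copy-removal} to obtain an equivalent linear shape preserving TOP, then invoke the F\"ul\"op--Gazdag normalization (Theorem~3.30 of~\cite{DBLP:journals/tcs/FulopG03}) for linear shape preserving TOPs to obtain the relabeling. Your additional remarks about inheritance of shape preservation and effectiveness are accurate but already implicit in the paper's two-line argument.
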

\begin{proof}
We can construct a linear TOP $M'$ from $M$ by applying Theorem~\ref{the:copy-removal}. 
According to Theorem~3.30 from~\cite{DBLP:journals/tcs/FulopG03},
which works on linear shape preserving TOPs,
there exists a top-down relabeling equivalent to $M'$.
\qed
\end{proof}

The construction in~\cite{DBLP:journals/tcs/FulopG03} uses properties of shape preserving TOPs similar to the ones we use here. 
The two constructions are similar enough that
we believe they could be merged into a single-step transformation transforming
an arbitrary shape preserving TOP directly into a relabeling TOP.

\section{Normalization of Shape Preserving MTTs}\label{sec:mtt_normalization}

In this section we show how to normalize shape preserving total deterministic MTTs.
As opposed to TOPs, one cannot transform any shape preserving MTT into an equivalent relabeling.
In fact, as can be seen from the example in the Introduction, we cannot even associate
a linear MTTs with every shape preserving one, i.e., copying of input variables is
an essential power of a shape preserving MTT that \emph{cannot} be avoided!
Instead, we find a class of MTTs that we call \emph{one-to-one} which characterizes
shape preserving MTTs: for a one-to-one MTT, each input node creates exactly one output node.
A total deterministic \emph{macro tree transducer} (MTT) 
$M$ is a tuple $(Q,P,\Sigma,\Delta,q_0,R)$, where
$Q$ is a ranked alphabet of \emph{states},
$\Sigma$ and $\Delta$ are ranked alphabet of \emph{input} and \emph{output symbols},
$q_0\in Q^{(0)}$ is the \emph{initial state}, and
 $R$ is the \emph{set of rules}, where for each $q\in Q^{(m)}$, $m\geq 0$,
	$\sigma\in\Sigma^{(k)}$ and $k\geq 0$ there is exactly one rule of the form 
$	\<q,\sigma(x_1,\dots,x_k)>(y_1,\dots,y_m) \to t$
	with $t\in T_{\Delta\cup \< Q,X_k>}(Y_m)$.
	The right-hand side $t$ of such a rule is denoted by
	$\text{rhs}_M(q,\sigma)$
We use a notation that is slightly different from the one used in the Introduction:
instead of, e.g., $q_2(x_2, q_{\text{id}}(x_1))$ we write
$\< q_2,x_2>(\< q_{\text{id}},x_1>)$.
Thus, we use angular brackets $\< \dots >$ to indicate a state call on an input subtree,
and use round brackets (after the angular brackets), to indicate the parameter arguments
of the particular state call.
Formally, if $\Gamma$ is a ranked alphabet and $A$ a set, then
$\langle \Gamma, A\rangle$ is a ranked alphabet of symbols
$\langle \gamma,a\rangle$ of rank $k$ for all $\gamma\in\Gamma^{(k)}$, $k\geq 0$,
and $a\in A$.

The semantics of a MTT $M$ (as above) is defined as follows.
We define the derivation relation $\Rightarrow_M$ as follows.
For two trees $\xi_1,\xi_2\in T_{\Delta\cup\< Q,T_\Sigma>}(Y)$,
$\xi_1\Rightarrow_M\xi_2$ if there exists a node $u$ in $\xi_1$
with $\xi_1/u=\<q,s>(t_1,\dots,t_m)$, $q\in Q^{(m)}$, $m\geq 0$,
$s=\sigma(s_1,\dots,s_k)$, $\sigma\in\Sigma^{(k)}$, $k\geq 0$,
$s_1,\dots,s_k\in T_\Sigma$,
$t_1,\dots,t_m\in T_{\Delta\cup\<Q,T_\Sigma>}(Y)$, and 
$\xi_2=\xi_1[u\leftarrow\xi]$ where $\xi$ equals
\vspace{-3mm}

\[
\zeta
[\![\<q',x_i>\leftarrow \<q',s_i>\mid q'\in Q,i\in[k]]\!]
[y_j\leftarrow t_j\mid j\in[m]]
\]
and $\zeta=\text{rhs}_M(q,\sigma)$.
The double angular brackets in the displayed formular
merely means that each occurrence of $x_i$ is replaced by $s_i$.
Since $M$ is total deterministic (i.e.,
for every state $q$, input symbol $\sigma\in\Sigma^{(k)}$ and $k\geq 0$, $M$ contains exactly one corresponding rule)
there is for every $\xi_1$ a unique
tree $\xi'\in T_\Delta(Y)$ such that $\xi_1\Rightarrow_M^* \xi'$.
For every $q\in Q^{(m)}$, $m\geq 0$ and $s\in T_\Sigma$ we define the
\emph{$q$-translation of $s$}, denoted by $M_q(s)$, as the unique tree $t$ in $T_\Delta(Y_m)$ such that
$\<q,s>(y_1,\dots, y_m)\Rightarrow_M^* t$.
We denote the \emph{translation realized by $M$} also by $M$, i.e., 
$M=M_{q_0}$ and for every $s\in T_\Sigma$,
$M(s)=M_{q_0}(s)$ is the unique tree $t\in T_\Delta$
such that $\< q_0,s>\Rightarrow_M^* t$. 

Let $s\in T_\Sigma$ and $u\in V(s)$.
Consider the tree $M(s[u\leftarrow x])$. Since there are no rules for the
input symbol $x$, this tree contains state calls in the form
of occurrences of symbols of the form $\langle Q,\{x\}\rangle$.
The state sequence of $M$ on $s$ at path $u$ is the sequence of states $q_1,q_2,\dots$
such that $\langle q_1,x\rangle,\langle q_2,x\rangle\dots$ are all elements
of $\langle Q,\{x\}\rangle$ that appear in $M(s[u\leftarrow x])$ in pre-order.
We define the \emph{origin function} 
$f: V(M(s)) \to V(s)$ associating with each output node the input node which produced it
via a rule application; see~\cite{DBLP:journals/ipl/ManethS23}
for more details and a precise definition.
Intuitively, if we consider the output nodes that are inserted via rule applications
into the tree $M(s[u\leftarrow x])$ when we build
$M(s[u\leftarrow \sigma(x_1,\dots,x_k)])$ for some $\sigma\in\Sigma^{(k)}$.

We illustrate this on an example MTT $M_1$ with the following rules: 
\[\begin{array} {l l}
\<q_0,a(x)> \to \<q_a,x>(\<q_b,x>) & \<q_a,a(x)>(y) \to \<q_a,x>(a(y)) \\
\<q_0,b(x)> \to \<q_m,x>(b(e)) & \<q_a,b(x)>(y) \to \<q_a,x>(y) \\
\<q_0,e> \to e & \<q_a,e>(y) \to y \vspace{1mm}\\
\<q_m,a(x)>(y) \to \<q_m,x>(a(y)) ~~~~~ & \<q_b,a(x)> \to \<q_b,x> \\
\<q_m,b(x)>(y) \to \<q_m,x>(b(y)) & \<q_b,b(x)> \to b(\<q_b,x>) \\
\<q_m,e>(y) \to y & \<q_b,e> \to a(e) \\
\end{array}\]

The MTT $M_1$ translates monadic input trees of the form
$b(w(e))$ into trees $w^r(b(e))$ where $w$ is an arbitary sequence of $a$- and $b$-nodes
and $w^r$ denotes the reverse of the sequence.
On input trees of the form $a(w(e))$ it outputs $a^mb^n(a(e))$ where
$m$ is the number of $a$-nodes in $s$ and $n$ is the number of $b$-nodes in $w$.

\subsubsection{One-to-one MTT}
An MTT $M$ is \emph{one-to-one} if, for every input tree $t\in T_\Sigma$,
corresponding output tree $t'$ and origin function $f:V(t') \to V(t)$: the origin function $f$ is a bijection. 

We can transform any MTT into a composition of a nondeterministic top-down relabeling
with a one-to-one MTT.
The construction relies on adjusting the rate of production of the output.
We define a notion of \emph{delay} in output production.
This \emph{delay} is bounded and we can track it using a top-down relabeling.
We can then use a top-down relabeling to compute pieces of output to produce earlier in order
to adjust the output production rate of the transducer. 

For the remainder of this section, we assume given a shape preserving MTT $M$ over the alphabets $(\Sigma,\Delta)$. 
We extend the definition of the size of trees to trees in $T_\Sigma(X \cup Y)$ such that $|t|$ is the number of nodes in $t$ with labels in $\Sigma$. With this definition, substitution of variables or parameters preserves size. 

We define the \emph{delay} of $M$ on $t \in T_\Sigma$ at path $u \in V(t)$ as the difference between the size of output produced from $t/u$ and the size of $t/u$. We write it $\Delta(t,u)$, so
$\Delta(t,u) = (\sum_{i=1}^{n} |q_i(t/u)|) - |t/u|$ where $q_1, \dots, q_n$ is the state sequence at path $u$. 


On the example $M_1$, the state sequences are $s_1=q_0$, $s_2=q_m$ and $s_3=q_a,q_b$.
Notice that each state sequence always has the same delay: $s_1$
has delay $0$, $s_2$ has delay $-1$ and $s_3$ has delay $1$. 
We can prove this in general. The delay is determined by the state sequence and the input subtree $t/u$, so if two subtrees $t/u$ and $t'/u'$ of two different input trees have the same state sequence but two different delays, then we could substitute $t/u$ for $t'/u'$ and get an output tree of a different size compared to the size of the input tree. 
So the state sequence always determines the delay, and since state sequences are computable by a relabeling, we can also compute the delay using a relabeling.

\begin{lemma}\label{lem:delay_bound}
One can construct a nondeterministic top-down relabeling $R$ associating, with each node at path $u \in V(t)$ in an input tree $t \in T_\Sigma$, the state sequence $q_1, \dots , q_n$ and the delay $\Delta(t,u)$ at that node. 
\end{lemma}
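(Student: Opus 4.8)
The plan is to prove two facts about the delay and then assemble them into the relabeling. First I would show that the delay $\Delta(t,u)$ depends only on the state sequence $q_1,\dots,q_n$ at $u$, and that it is computable from that sequence. Fix such a state sequence, realized by some input tree $t$ with distinguished node $u$, and write $C$ for the context obtained from $t$ by deleting $t/u$. For any $s\in T_\Sigma$ plugged into the hole, the output $\Delta$-nodes of $M(C[s])$ split, according to their origin, into those produced while processing the context and those produced while processing $s$. Since substitution of variables and parameters preserves size (as noted just before the lemma) and these two origin-classes are disjoint, the latter count is exactly $\sum_{i=1}^n |M_{q_i}(s)|$, while the former, call it $A(C)$, does not depend on $s$ (the context is processed identically no matter what fills the hole). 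As $M$ is shape preserving, $|M(C[s])| = |C[s]| = |C| + |s|$, so $\sum_{i=1}^n |M_{q_i}(s)| - |s| = |C| - A(C)$, a quantity independent of $s$. The left-hand side is written purely in terms of the state sequence and $s$; evaluating it for two contexts that induce the same state sequence therefore forces their values $|C|-A(C)$ to coincide, so $\Delta(t,u)$ is a function $\Delta(\bar q)$ of the state sequence $\bar q$ alone. In particular I can compute $\Delta(\bar q)$ by choosing any single $s\in T_\Sigma$ (the translations $M_{q_i}$ are total) and summing $|M_{q_i}(s)| - |s|$.

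Next I would argue that only finitely many state sequences occur and that they are computed top-down. Because $M$ is shape preserving it is of linear size increase, and by the characterization of linear-size-increase MTTs as the finite-copying ones, the number of state calls on any input subtree is bounded by a constant $c$; hence every state sequence has length at most $c$, and there are only finitely many of them. The state sequence at the root is $q_0$, and the state sequence at the $j$-th child of a $\sigma$-labelled node carrying state sequence $q_1,\dots,q_n$ is obtained by reading off, in pre-order, the state calls on $x_j$ arising when each $\<q_i,\sigma>$ is expanded by $\rhs_M(q_i,\sigma)$. This yields a deterministic transition function computable from the rules of $M$, and the occurring state sequences are exactly those reachable by this transition from the root label $q_0$.

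Finally I would assemble the relabeling $R$: its states are the finitely many reachable state sequences, its initial state is $q_0$, and its transition on a pair $(\bar q,\sigma)$ sends each child $j$ to the child sequence computed above. At each node $R$ outputs the label $(\bar q,\Delta(\bar q))$, attaching to the current state sequence its precomputed delay. This is a (deterministic, hence in particular nondeterministic) top-down relabeling realizing the claimed association of state sequence and delay with every node of every input tree.

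The step I expect to be the main obstacle is the finiteness of the set of state sequences: it rests on the chain from shape preservation to linear size increase to finite copying, and on making the top-down transition on state sequences fully precise, since the pre-order bookkeeping of parameterized state calls must be handled carefully to guarantee that the child sequence is a well-defined function of the parent sequence and the input symbol. The delay-only-depends-on-the-state-sequence argument, by contrast, is a clean consequence of size additivity together with shape preservation.
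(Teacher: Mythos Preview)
Your approach is essentially the paper's: both establish that the delay is determined by the state sequence via a substitution/size-additivity argument resting on shape preservation, invoke the LSI characterization for finiteness of state sequences, and then package everything into a top-down relabeling. Two differences are worth flagging. First, the paper explicitly preprocesses $M$ (via the LSI result of Engelfriet--Maneth) into the composition of a deterministic bottom-up relabeling $R$ with a parameter-linear MTT having only finitely many state sequences, and proves the delay claim in that setting (so the delay depends on the pair of $R$-label and state sequence); you argue directly for the original shape-preserving $M$, which is cleaner but requires you to pass explicitly to an \emph{equivalent} finite-copying MTT, since the characterization yields equivalence, not that $M$ itself is finite copying. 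Second, instead of evaluating on one witness $s$, the paper derives the recursion $\Delta(t,u)=\sum_i|\rhs_i|-1+\sum_j\Delta(t,uj)$ and from it an explicit bound on $|\Delta(t,u)|$; that bound is not part of the present lemma but is essential for the subsequent normalization (it is the $b$ feeding the look-ahead construction), so you will eventually need it.

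On the obstacle you flag: you are right to worry. The child state \emph{sequence} is not in general a function of the parent sequence and the input symbol alone, because parameter substitution can interleave the state calls sitting in a state's actual parameters with the fresh $x_j$-calls introduced by its rule. What \emph{is} determined top-down is the nesting tree (skeleton) of state calls in the provisional output; for a finite-copying MTT there are only finitely many such skeletons, and the state sequence is its pre-order traversal. Tracking this finite object, rather than the bare sequence, is what makes the relabeling work; neither your sketch nor the paper's appendix spells this out.
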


For our construction, for state sequences with positive delay, we will need to produce output earlier than when it is produced by $M$. To do this we change the relabeling so that it predicts parts of the output of states, which we can then output earlier compared to $M$.

On the example MTT $M_1$, the bound on the delay is $b=1$, so we only need to predict the node at the root of the output of each state.
For example, the root node of $q_b(t)$ for any input tree $t$ is $b$ if $t$ contains at least one $b$, but it is $a$ if $t$ contains no $b$. 
Each of these cases corresponds to $t$ being in a regular tree language, which can be computed by a relabeling. 

In the full construction (detailed in the Appendix), we need to predict larger parts of those outputs, but for every predicted node in the output of a state $q$, the reverse image is a regular tree language, which we can compute with a relabeling. Moreover, an important first step of the full construction is to make the given MTT
\emph{nondeleting in the parameters} and \emph{nonerasing in the parameters}.
This means that every parameter that appears in the left-hand side of a rule must
also appear in the right-hand side of a rule, and, that there is no rule
with a right-hand side of the form $y_1$. These two properties guarantee that each state call
of the MTT will produce at least one output symbol; in fact, for every state $q$ and every
input tree $s$, it guarantees that $M_q(s)$ is of the form $\delta(t)$ for some $\delta\in \Delta$ and tree $t$.

We are now able to normalize $M$ into a composition of a relabeling with a one-to-one MTT:

\begin{theorem}\label{the:mtt_norm}
Given a shape preserving MTT $M$, one can effectively compute a relabeling $R$ and a one-to-one MTT $M'$ so that $R \comp M'$ is equivalent to $M$. 
\end{theorem}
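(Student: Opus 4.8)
The plan is to follow the blueprint laid out in the paragraphs immediately preceding the statement, turning the informal delay-adjustment idea into a concrete two-stage factorization $M = R \comp M'$. First I would establish the structural preprocessing: normalize $M$ so that it is nondeleting and nonerasing in the parameters, as described in the excerpt, so that each state call $M_q(s)$ has the form $\delta(t)$ with $\delta\in\Delta$. This guarantees every state call emits at least one genuine output symbol, which is what makes the origin function well-behaved and what lets us reason node-by-node. I would record that this preprocessing does not change the realized translation and preserves shape preservation, since it only reorganizes where output symbols are emitted without adding or deleting $\Delta$-labeled nodes.

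\medskip
Next I would make precise the notion of \emph{delay} $\Delta(t,u)$ and prove it is bounded. By Lemma~\ref{lem:delay_bound} the delay at each node is determined by the state sequence at that node and is computable by a nondeterministic top-down relabeling $R_0$. The key quantitative claim is that $|\Delta(t,u)|$ is bounded by a constant $b$ depending only on $M$: since $M$ is shape preserving, the total output size equals the total input size, and the delay measures a local imbalance that must be reabsorbed within a bounded number of steps; I would bound $b$ in terms of $|Q|$ and the maximal right-hand side size, using that the delay of a state sequence is an invariant (the argument sketched in the excerpt, that differing delays for the same state sequence would let us substitute subtrees and violate size preservation). With $b$ in hand I would strengthen $R_0$ to a relabeling $R$ that, at each input node, annotates not just the state sequence and delay but also the bounded pieces of output that need to be ``produced early'' to realize a positive delay — for each predicted output node the preimage is a regular tree language, hence computable by $R$.

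\medskip
With the annotated relabeling $R$ fixed, I would construct the one-to-one MTT $M'$ running on the relabeled input. The design principle is that $M'$ should emit, at each input node, exactly the output node that the origin function assigns to it: whenever $M$ would emit several output symbols at one input node (positive delay) or none (negative delay), $M'$ uses the annotation to redistribute emission so that precisely one output node is produced per input node. The predicted output fragments carried by $R$ let $M'$ emit, at the current node, output that $M$ would otherwise have deferred, while the nondeleting/nonerasing normalization ensures $M'$ never needs to emit more than one node at a leaf or an internal node beyond its own. I would then verify the two obligations: (i) $R \comp M' = M$ as translations, by an induction on the input tree matching the derivations of $M'$ against those of $M$ through the origin function; and (ii) $M'$ is one-to-one, i.e.\ its origin function is a bijection, which follows because $M'$ emits exactly one output node per input node and $M$ is shape preserving, so input and output have equally many nodes and the per-node emission is forced to be a bijection.

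\medskip
The main obstacle I expect is step (i), establishing $R \comp M' = M$ while simultaneously maintaining the one-to-one discipline of $M'$. The difficulty is bookkeeping: the delay must be transferred correctly across parameter positions, since in an MTT output can be produced inside parameters (as the Introduction's example shows, copying is essential and cannot be eliminated). Getting the predicted-output fragments to propagate through parameter substitution without either duplicating or dropping an output node — so that the redistributed emission is still exactly one node per input node and reconstructs $M$'s output faithfully — is the delicate part, and is precisely why the detailed construction is deferred to the Appendix.
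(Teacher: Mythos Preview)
Your outline is the paper's approach: preprocess to parameter-nondeleting and nonerasing, invoke Lemma~\ref{lem:delay_bound} for the bounded delay and its computing relabeling, enrich the relabeling with bounded output predictions (this is the paper's Lemma~\ref{lem:delay_look_ahead}), and then rewrite rules so that every reachable state sequence has delay~$0$.

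There is one point where your mechanism is incomplete. You write that the nondeleting/nonerasing normalization ``ensures $M'$ never needs to emit more than one node''; that is not what those properties yield --- they guarantee each state call contributes \emph{at least} one $\Delta$-node, not at most one. Accordingly, the prediction device you describe only handles the \emph{positive}-delay direction: when a child's state sequence has $\delta_i>0$, replace the call $\<q',x_i>$ by the predicted prefix plus helper states $\<q',p>$ that compute the residual subtree at path $p$. For \emph{negative} delay the paper uses a second, dual mechanism that your proposal does not supply: an output symbol $a$ already present in a right-hand side is wrapped into a fresh helper state $\<a,c>$ called on the child $x_i$, with rule $\<a,c>(x,y_1,\dots,y_g)\to a(y_1,\dots,y_g)$, thereby \emph{deferring} emission of $a$ by one input level and raising the child's delay by~$1$. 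This deferral is then iterated top-down from the root (which has delay~$0$) until all reachable state sequences have delay~$0$. Without this explicit deferral step the redistribution you sketch cannot be carried out when some child has negative delay, so this is a localized but genuine gap in the plan.
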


Note that being a one-to-one MTT is equivalent to having $\Delta(t,u) = 0$ for all $t \in T_\Sigma$ and $u \in V(t)$. 
We show the construction on our example $M_1$. 

First, we compute the relabeling $R_1$ giving, for each input subtree, the state sequence, the delay and the root nodes of the outputs of states $q_0$, $q_m$, $q_a$ and $q_b$. 
Each input node with symbol $a$ is relabeled with symbol $(a,r)$ where $r$ gives the state sequence, the delay and the root nodes of outputs of states. 

Second, we decrease the delay of state sequences where it is positive. State sequence $q_a, q_b$ has delay $+1$. We decrease the delay through $q_b$ by replacing calls to $\<q_b,x>$ with the correct root node (given by the relabeling) and a new helper state which computes what appears below this root node. As we saw earlier, there are two possible root nodes for the output of $q_b$: $a$ and $b$.  

We replace the rule $\<q_0,a(x)> \to \<q_a,x>(\<q_b,x>)$ with the rules of the form: 
\begin{itemize}
\item $\<q_0,(a,r)(x)> \to \<q_a,x>(a(\<(q_b,1),x>))$ for each $r$ giving $a$ as root node of $q_b(t)$, 
\item $\<q_0,(a,r)(x)> \to \<q_a,x>(b(\<(q_b,1),x>))$ for each $r$ giving $b$ as root node of $q_b(t)$. 
\end{itemize}
The rules $\<q_b,a(x)> \to \<q_b,x>$ and $\<q_b,b(x)> \to b(\<q_b,x>)$ are replaced with: 
\begin{itemize}
\item $\<q_b,(a,r)(x)> \to a(\<(q_b,1),x>)$ and $\<q_b,(b,r)(x)> \to b(a(\<(q_b,1),x>))$ for $r$ giving $a$ as root node of $q_b(t)$, 
\item $\<q_b,(a,r)(x)> \to b(\<(q_b,1),x>)$ and $\<q_b,(b,r)(x)> \to b(b(\<(q_b,1),x>))$ for $r$ giving $b$ as root node of $q_b(t)$. 
\end{itemize}
The newly created helper state $(q_b,1)$ computes the subtree at path $1$ (first child of the root) of the output tree computed by $q_b$. 
To reflect this, we add rules for $(q_b,1)$ computed from the rules for $q_b$: 
\[
\begin{array}{l l}
\<q_b,(a,r)(x)> \to a(\<(q_b,1),x>)    & \text{ becomes: }~ \<(q_b,1),(a,r)(x)> \to \<(q_b,1),x> \\
\<q_b,(b,r)(x)> \to b(a(\<(q_b,1),x>)) & \text{ becomes: }~ \<(q_b,1),(b,r)(x)> \to a(\<(q_b,1),x>) \\
\<q_b,e> \to a(e)                      & \text{ becomes: }~ \<(q_b,1),e> \to e \\
\end{array}
\]
The rules above are for $r$ giving $a$ as root node of $q_b(t)$ (we need to add similar rules for $r$ giving $b$ as root node of $q_b(t)$). 
The state sequence $q_a, q_b$ has been effectively replaced with the sequence $q_a, (q_b,1)$ which has delay $0$. 
Third, we increase the delay of state sequences where it is negative. State sequence $q_m$ has delay $-1$. For each rule with $\<q_m,x>$ and an output symbol in the right-hand side, we transform the output symbol into a new helper state. 
For example, $\<q_0,b(x)> \to \<q_m,x>(b(e))$ becomes $\<q_0,(b,r)(x)> \to \<q_m,x>(b(\<(1,e),x>))$, where $(1,e)$ is a helper state computing $e$. 
We now have the state sequence $q_m, (1,e)$ with delay $0$. 
We then add rules for $(1,e)$ so that the state sequence $q_m, (1,e)$ keeps delay $0$ when $q_m$ calls itself. 
We add the rules $\<(1,e),(a,r)(x)> \to \<(1,e),x>$; $\<(1,e),(b,r)(x)> \to \<(1,e),x>$ and $\<(1,e),(e,r)> \to e$ for all label $r$. 
We have now built a one-to-one MTT which, precomposed with the relabeling $R_1$, is equivalent to $M_1$. The full construction is laid out in appendix.



\section{Conclusion and Open Problems}

It was shown that shape preservation can be decided for various
classes of tree transducers: nondeterministic top-down tree transducers (TOPs) with look-ahead,
nondeterministic bottom-up tree transducers, and compositions of total deterministic
macro tree transducers (MTT). The proofs are very short and concise, because
the rely on previously known results. 
For the nondeterministic classes shape preservation is reduced
to deciding functionality of such transducers and to deciding equivalence of
functional such transducers (both of which are known from the literature).
For MTTs (or compositions of MTTs) we first decide linear size increase and then
decide equivalence of the resulting class.

It was shown how to construct for a shape preserving TOP an equivalent
relabeling TOP. We showed how to go from non-linear to linear shape preserving TOPs
and then rely on the result of~\cite{DBLP:journals/tcs/FulopG03} to go to a relabeling TOP.
In the future it would be nice to have a self-contained procedure for the full
normalization and analyze this procedure further: can it be shown that if
the given shape preserving TOP is deterministic, then we can construct an equivalent
deterministic relabeling? Can we normalize bottom-up transducers using
this very same procedure?
For MTTs a number of open problems remain, most prominently the question whether our
results can be generalized to nondeterministic MTTs.

\bibliographystyle{splncs04}
\bibliography{bib}

\section{Appendix}

\subsubsection{Proof of Lemma~\ref{lem:state_sets}: }
If $q_i = q_j$ then there exists a tree $t \in T_\Sigma$ and a valid run whose provisional output at some path $u \in V(t)$ is: $t[u \leftarrow x] = t'[u_k \leftarrow q_k(x)]_{k \leq n}$. 
Note that the subrun for $q_i(t/u)$ (at path $u_i$ in the output) visits at least one leaf of $t/u$ (as, according to Lemma~\ref{lem:leafs}, it cannot produce a leaf from an inner node of $t/u$, and its output $t'/{u_i}$ contains at least one leaf). 
Then, if we replace the subrun for $q_j(t/u)$ (at path $u_j$ in the output) with the subrun for $q_i(t/u)$ (at path $u_i$ in the output), then we obtain a new valid run on $t$, but with at least two state calls on a same leaf in $t/u$, which is impossible according to Lemma~\ref{lem:leafs}. \qed


\subsubsection{Proof of Lemma~\ref{lem:origin_distance}:} 
Let $t[u \leftarrow x] \xrightarrow{M} t'[u' \leftarrow q(x)]$ and $S$ as above, with $t/u \xrightarrow{q} t'/{u'}$. We assume that $q \notin f(S)$. 

So there exists $(t_1,t'_1)$ and $(t_2,t'_2) \in M_q$ such that $t_1, t_2 \in I(S)$ and $\sh(t'_1) \neq \sh(t'_2)$. 
If $u$ and $u'$ were incomparable w.r.t.\ the prefix order, then we would have two runs $t[u \leftarrow t_1] \xrightarrow{M} t''_1$ and $t[u \leftarrow t_2] \xrightarrow{M} t''_2$ with $t''_1/{u'} = t'_1$ and $t''_2/{u'} = t'_2$. 
So $\sh(t''_1/{u'}) \neq \sh(t''_2/{u'})$. 
Since $M$ is shape preserving we would have: $\sh(t/{u'}) = \sh(t''_1/{u'}) = \sh(t''_2/{u'})$, a contradiction, so $u$ and $u'$ are comparable. 
So there exists a path $v$ with either $u' = u\,v$ or $u = u'\,v$. 

There exists $t_1 \in \dom(M_q)$ with height $\leq 2^{|Q|}$ (any tree of larger height must contain a loop repeating the same state set, which one could remove to get a smaller tree in $\dom(M_q)$). So there is $t'_1$ with $(t_1,t'_1) \in M_q$, and the height of $t'_1$ is $\leq \max(\rhs) . 2^{|Q|}$ where $\max(\rhs)$ is the maximum height of the right-hand side of a rule of $M$. 
We have $t[u \leftarrow x] \xrightarrow{M} t'[u' \leftarrow q(x)]$ and $t_1 \xrightarrow{q} t'_1$, so $\sh(t[u \leftarrow t_1]) = \sh(t'[u' \leftarrow t'_1])$. 

In particular, if $u = u'\,v$ then $\sh(t_1) = \sh(t'_1/v)$, so $|v| \leq \max(\rhs) . 2^{|Q|}$. 
If $u' = u\,v$ then $\sh(t_1/v) = \sh(t'_1)$, so $|v| \leq 2^{|Q|}$. \qed

\subsubsection{Proof of theorem~\ref{the:copy-removal}:}
We first compute $S_Q$ the set of valid state sets for $M$. For each $S \in S_Q$ we compute $f(S)$ the set of states $q$ such that $M_q(I(S))$ is finite (where $I(S) = \bigcap_{q \in S} \dom(q)$). For each $S \in S_Q$ and $q \in f(S)$, we compute $M_q(I(S))$. For each $t' \in M_q(I(S))$, the reverse image of $\{t'\}$ by $M_q$ is
effectively a regular tree language (e.g. by  Theorem~7.4 of~\cite{DBLP:journals/jcss/EngelfrietV85})
We compute non-deterministic top-down tree automata for all such $t' \in M_q(I(S))$. We denote by $\A$ the product automaton of all of these automata. Then $\A$ determines, on any input tree $t$, the sets $M_q(\{t\})$ for all state set $S \in S_Q$ and state $q \in f(S)$. 
We denote by $P_{\A}$ the set of states of $\A$, we denote by $T_p$ the set of trees $t$ such that there is a successful run on $t$ starting with $p$ at the root. For each $p \in P_{\A}$, we note $p(S,q) = M_q(T_p)$. 

\paragraph{Construction}
We define the new linear TOP $M' = (Q', \Sigma, \Delta, Q_0, R')$ by: 

The set of states: 
$Q' = \{ (p, v, t) \mid p \in P_{\A}, v \text{ is a path of length} \leq 2^{|Q|}, t \in T_{\Delta \cup \<Q,\{x\}>} \text{ is of height} \leq \max(\rhs).2^{|Q|} \}$ 

$Q_0 = \{ (p,\varepsilon, q_0(x)) \mid p \in P_{\A}, q_0 \text{ is an initial state of } M \}$

The set of rules $R' = R_+ \cup R_\varepsilon \cup R_-$

Rules in $R_+$:
$(p, v, q(x))(\sigma(x_1)) \to t'[u' \leftarrow (p',v',q'(x))(x_1)]$
if there is a rule $q(\sigma(x_1)) \to t'[u' \leftarrow q'(x_1)]$ in $M$ with $t' \in T_\Delta$, $v\,1 = u'\,v'$ and $p(\sigma) \xrightarrow{\A} p'$.

Rules in $R_\varepsilon$:
$(p, \varepsilon, \sigma'(t_1, \dots ,t_n))(\sigma(x_1, \dots, x_n)) \to \sigma'((p_1,\varepsilon,t'_1)(x_1), \dots, (p_n,\varepsilon, t'_n)(x_n))$
if there is a rule $p(\sigma) \xrightarrow{\A} p_1, \dots, p_n$ in $\A$ and, 
for each $i \leq n$, noting $S_i$ the set of states appearing in $t_i$:
\[t'_i = t_i[q(x_j) \leftarrow t_q]_{q \in S_i}\]
where, 
for each $q \in S_i \setminus f(S_i)$, $q(x_j)$ is replaced with $t_q$ such that there is a rule $q_i(\sigma(x_1, \dots, x_k)) \to t_q$ in $M$ and $i=j$, 
for each $q \in f(S_i)$, $q(x_j)$ is replaced with some $t_q \in p_j(S_i,q)$.

Rules in $R_-$:
$(p, v, q(x))(\sigma(x_1, \dots, x_n)) \to t'[v \leftarrow \sigma'((p_1,\varepsilon,t'_1)(x_1), \dots, (p_n,\varepsilon, t'_n)(x_n))]$
if there is a rule $p(\sigma) \xrightarrow{\A} p_1, \dots, p_n$ in $\A$ and 
a rule $q(\sigma(x_1, \dots, x_n)) \to t'[v \leftarrow \sigma'(t_1, \dots, t_n)]$ in $M$ with $t' \in T_\Delta$ and, 
for each $i \leq n$, noting $S_i$ the set of states appearing in $t_i$:
\[t'_i = t_i[q(x_j) \leftarrow t_q]_{q \in S_i}\]
where, 
for each $q \in S_i \setminus f(S_i)$: $q(x_j)$ is replaced with $t_q$ such that there is a rule $q_i(\sigma(x_1, \dots, x_k)) \to t_q$ in $M$ and $i=j$, and, 
for each $q \in f(S_i)$: $q(x_j)$ is replaced with some $t_q \in p_j(S_i,q)$. 

Note that $M'$ is linear and non-deleting, so each input node is processed by exactly one state. 

\paragraph{Correctness}
First note that a run of $M'$ on input $t$ induces a run of $\A$ on $t$ such that, if the node at path $u$ in $t$ is processed by a state $(p,v,t')$ of $M'$, then it is processed by state $p$ of $\A$. 

\begin{claim}
For all $(p,v,t_0) \in Q'$, trees $t \in T_\Sigma$ and $t' \in T_\Delta$:
\[\text{if }~~ (p,v,t_0)(t) \xrightarrow{M'} t' ~~~~\text{ then }~~ t_0[x \leftarrow t] = t' ~~\text{ and }~~ \sh(t') = \#_1^{|v|}(\sh(t))\]
\end{claim}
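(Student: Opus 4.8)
The plan is to prove the claim by induction on the height of the input tree $t$, reading off the three families of rules $R_+$, $R_\varepsilon$, $R_-$ as the three cases of how a state $(p,v,t_0)$ of $M'$ can fire. The claim bundles two assertions: a \emph{correctness} part, $t_0[x\leftarrow t]=t'$, saying that the output of $M'$ from state $(p,v,t_0)$ on $t$ is exactly the tree $t_0$ with its unique occurrence of $x$ replaced by $t$; and a \emph{shape} part, $\sh(t')=\#_1^{|v|}(\sh(t))$, saying the output shape is the input shape with a monadic stalk of length $|v|$ prepended. Note the shape part actually follows from the correctness part: since $t_0$ has height $\le\max(\rhs)\cdot 2^{|Q|}$ and contains a single $x$ at depth exactly $|v|$ along a monadic path (this is the invariant maintained by the path component $v$), substituting $t$ for $x$ yields a shape $\#_1^{|v|}(\sh(t))$. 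So the real content is the correctness statement, and I would prove it first, deriving the shape statement as a corollary at the end of each inductive step.

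For the induction, the base-type step is a state $(p,\varepsilon,t_0)$ where $t_0$ already lies in $T_\Delta$ (no pending state call), which fires an $R_\varepsilon$ rule: here $t_0=\sigma'(t_1,\dots,t_n)$ and I must check that replacing each subordinate state call inside the $t_i$ — using the $\fin$-states' precomputed outputs via the automaton state $p_j$, and the non-$\fin$ states via the matching rule of $M$ — reproduces exactly $M$'s behaviour on the copied subtrees. The two recursive cases are: \textbf{$R_+$}, where $v$ is non-empty and we are still descending toward the origin (the path shrinks as $v\,1=u'\,v'$, so $|v'|<|v|$ or the remaining stub is consumed), the tree component $t_0=q(x)$ being carried unchanged until we reach the origin; and \textbf{$R_-$}, where the origin is reached and the single state call $q(x)$ expands according to an $M$-rule, planting a fresh subtree at path $v$ and spawning child states $(p_i,\varepsilon,t'_i)$. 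In each case I would apply the induction hypothesis to the child state calls $(p_i,v_i,t_i')(x_j)$ on the proper subtrees, then reassemble and verify that the substitution $t_0[x\leftarrow t]$ is recovered.

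The key technical ingredients are all supplied earlier: Lemma~\ref{lem:origin_distance} guarantees the path $v$ stays bounded by $2^{|Q|}$ and that origin and output node are prefix-comparable, which is exactly what licenses restricting $Q'$ to bounded $v$ and bounded-height $t_0$; and the automaton $\A$ together with the finiteness of $M_q(I(S))$ for $q\in\fin(S)$ ensures the $\fin$-state outputs $t_q\in p_j(S_i,q)$ can genuinely be guessed correctly, matching what $M$ would produce. The correspondence between a run of $M'$ and the induced run of $\A$ (stated just before the claim) is what makes the choices of $p_j$ legitimate at each node.

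\textbf{The main obstacle} I anticipate is the $R_-$ / $R_\varepsilon$ bookkeeping for the $\fin$-states: I must show not merely that \emph{some} guess $t_q\in p_j(S_i,q)$ exists, but that the guess consistent with the actual input subtree — the one $M$ would really output — is among the available automaton choices, and that nondeterministic $M'$ can commit to it coherently across all output copies of the same input subtree. This is where the regularity of $M_q^{-1}(\{t\})$ and the product construction of $\A$ do the real work: because $\A$ reads each input node once (Lemma~\ref{lem:leafs} ensures every node is visited) and its state at a node determines $M_q(\{t\})$ for all relevant $q$, the guesses are forced to be globally consistent. Making this consistency argument airtight — rather than just locally plausible — is the crux; the remaining substitution identities are routine once the automaton-guided choices are pinned down.
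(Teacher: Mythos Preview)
Your overall approach—induction on $t$ with case analysis on the three rule families $R_+$, $R_\varepsilon$, $R_-$—is exactly what the paper does (its proof is the single sentence ``We can prove this claim by induction over the tree $t$'').

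There is, however, a genuine error in your plan. You assert that the shape part follows from the correctness part because ``$t_0$ contains a single $x$ at depth exactly $|v|$ along a monadic path''. This is wrong on both counts. Inspect which states can actually fire a rule: if $v\neq\varepsilon$ then only $R_+$ or $R_-$ applies, forcing $t_0=q(x)$, so the lone $x$ sits at depth~$1$, not $|v|$; if $v=\varepsilon$ then $t_0$ may be $\sigma'(t_1,\dots,t_n)$ with several state calls and hence several occurrences of $x$. More fundamentally, even under the intended reading of correctness (namely $t_0[x\leftarrow t]\xrightarrow{M} t'$, since $t_0$ contains state calls of $M$), the shape of $t'$ is \emph{not} determined: individual states $q$ of a shape-preserving TOP need not themselves be shape preserving—only the translation from $q_0$ is. So $\sh(t')=\#_1^{|v|}(\sh(t))$ must be carried as an explicit inductive invariant and verified in each of the three rule cases; it cannot be read off from the structure of $t_0$.

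A smaller slip: in your $R_+$ discussion you say ``the path shrinks''. From $v\cdot 1=u'\cdot v'$ one gets $|v'|=|v|+1-|u'|$, which can exceed $|v|$ (take $u'=\varepsilon$, as in the paper's example rule $(\emptyset,\varepsilon,q_0(x))(h(x_1))\to(\emptyset,1,q_1(x))(x_1)$, where $v=\varepsilon$ but $v'=1$). This does not break the argument, since the induction is on $t$ rather than on $|v|$, but your picture of $R_+$ as ``descending toward the origin'' is not accurate: $R_+$ can also push the origin further below the current output position.
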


We can prove this claim by induction over the tree $t$. 

\begin{claim}
For all trees $t \in T_\Sigma$ and $t' \in T_\Delta$ such that there is a run $t \xrightarrow{M} t'$ whose provisional run at some path $u \in V(t)$ is of the form $t[u \leftarrow x] \xrightarrow{M} t'[u_i \leftarrow q_i(x)]$, 
there exists a tree $t_0 \in T_{\Delta \cup \<Q,\{x\}>}$ and paths $u' \in V(t')$ and $v \in V(t'/{u'})$ such that $u = u'\,v$ and, for all successful run of $\A$ on $t$ with state $p \in P_{\A}$ at path $u$: 
\[(p,v,t_0)(t/u) \xrightarrow{M'} t'/{u'}\]
\end{claim}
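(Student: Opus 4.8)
The plan is to prove the claim by induction on the height of the input subtree $t/u$, exhibiting the three witnesses $u'$, $v$ and $t_0$ \emph{explicitly} from the given run of $M$ rather than merely asserting their existence; proving this sharper, constructive statement is what makes the induction go through, since the child state calls emitted by a rule of $M'$ must coincide with the witnesses attached to the children. Fix the run of $M$ on $t$ and a node $u$ with state set $S$, and write $\hat t = t'[u_i \leftarrow q_i(x)]_{i\le n}$ for the provisional output at $u$, so that the final output is recovered by substituting the actual run-output $M_{q_i}(t/u)$ at each position $u_i$. By Lemma~\ref{lem:origin_distance}, every $q_i \notin \fin(S)$ sits at a position $u_i$ comparable to $u$, with shape distance bounded as in that lemma. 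I would first extract the combinatorial consequence that governs the whole construction: because each $u_i$ is the position of a state call and hence a \emph{leaf} of $\hat t$, at most one non-$\fin$ call can lie strictly above $u$, and if such a call $q_i$ (at $u_i < u$) exists then every position $\geq u$ lies inside the single substituted tree $M_{q_i}(t/u)$, forcing $q_i$ to be the \emph{only} non-$\fin$ call.

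This dichotomy selects the witnesses and the family of rules used. If every non-$\fin$ call has position $\geq u$, I set $u' = u$, $v = \varepsilon$, and let $t_0$ be $\hat t/u$ with each $\fin$-call $q_j(x)$ replaced by the output $M_{q_j}(t/u)$ it actually produces in the run, keeping the non-$\fin$ calls as $q(x)$; this is the situation handled by $R_\varepsilon$ (and, along monadic stretches where a single non-$\fin$ call is propagated, by $R_+$). If instead a unique non-$\fin$ call $q_i$ lies at $u_i < u$, I set $u' = u_i$, take the nonempty $v$ with $u = u'\,v$, and put $t_0 = q_i(x)$, which is the situation driving $R_-$ (consuming the path $v$ and branching) after a run of $R_+$ steps down the monadic portion. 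In both cases the bounds of Lemma~\ref{lem:origin_distance} guarantee that $|v|$ and the height of $t_0$ stay within the limits imposed on $Q'$, so $(p,v,t_0)$ is indeed a state of $M'$.

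For the inductive step I would read the root symbol $t[u]$ of $t/u$, apply the unique rule of $M'$ determined by $t[u]$, the state $(p,v,t_0)$ and the $\A$-transition at $u$, and then invoke the induction hypothesis at each child $uj$, whose subtree is strictly shorter. The substance of the argument is the verification that the child state calls emitted by the chosen rule are exactly the witnesses that the dichotomy assigns to the nodes $uj$: the skeleton fragments $t'_i$ appearing in $R_\varepsilon$ and $R_-$ are precisely the provisional-output subtrees handed down to the children, the non-$\fin$ calls are propagated verbatim, and the relative paths are updated consistently through the identity $v\,j = u'\,v'$ built into $R_+$. Reassembling the child outputs $t'/(u'j)$ then yields $t'/u'$, in agreement with the preceding (soundness) claim, which independently certifies that $(p,v,t_0)(t/u)$ rewrites to $t_0[x\leftarrow t/u]$.

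The one genuinely delicate ingredient, and the step I expect to be the main obstacle, is the interaction between the nondeterministic guessing of $M'$ and the look-ahead automaton $\A$ for the $\fin$-states. I must check that the correct output $o = M_{q_j}(t/u)$ occurring in the run is always an admissible guess, i.e.\ lies in $p_j(S,q)$ for whatever $\A$-state the rule reaches. This holds because a successful run of $\A$ reaching $p$ at $u$ witnesses $t/u \in T_p$, whence $o \in M_{q_j}(t/u) \subseteq M_{q_j}(T_p) = p(S,q)$; this inclusion is exactly what licenses the claim to quantify over \emph{all} successful runs of $\A$. Once this coordination between the three rule families, the positional bookkeeping of $v$ and $t_0$, and the availability of guesses is in place, the remaining verifications are routine substitution computations.
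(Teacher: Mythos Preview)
Your proposal is essentially correct and considerably more detailed than the paper, whose proof of this claim is the single line ``We can prove this claim by induction over path $u \in V(t)$.'' The explicit dichotomy you extract from Lemma~\ref{lem:origin_distance} (either every non-$\fin$ call sits at a position $\geq u$, or a unique non-$\fin$ call sits strictly above $u$) is exactly the structural fact that makes the three rule families $R_+$, $R_\varepsilon$, $R_-$ exhaustive, and your treatment of the $\A$-guessing is the right justification for the universal quantifier over successful runs of~$\A$.

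There is one difference worth flagging. The paper phrases the argument as induction on the path $u$, i.e.\ a top-down invariant: starting from the initial state $(p,\varepsilon,q_0(x))$ at the root, one shows that after each rule application the state call of $M'$ at a child $uj$ again has the form predicted by the claim. You instead induct on the height of $t/u$, i.e.\ bottom-up: assuming the runs exist at all children, you assemble the run at $u$ by firing one rule and plugging in the inductive witnesses. Both orientations work here because $M'$ is linear and nondeleting (so the state at $uj$ is determined by the rule fired at $u$), and they prove the same statement; your direction is arguably cleaner for establishing \emph{existence} of a run, while the paper's direction is closer to how one would actually simulate $M'$. A small slip: you write ``the unique rule of $M'$'', but $M'$ is nondeterministic---what you mean (and later make precise) is the rule of $M'$ selected by the chosen $\A$-transition and by guessing, for each $\fin$-state, the particular output it produces in the fixed run of $M$.
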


We can prove this claim by induction over path $u \in V(t)$. 
\qed


\subsection{Proofs for normalization of shape preserving MTTs}

\subsubsection{Finiteness of $T_\Sigma$ and $T_\Delta$}
Note that if $T_\Sigma$ is finite, then the normalization procedure is trivial: we can use a relabeling to check the whole input, then output the corresponding output one node at a time. 
If $T_\Sigma$ is infinite but not $T_\Delta$ then no total MTT can be shape preserving. 
For the rest of this part we assume that both $T_\Sigma$ and $T_\Delta$ are infinite. 

\subsubsection{Erasing input subtrees, copying and deleting parameters}
Note that a shape preserving MTT has to visit every node of its input, otherwise we could substitute the unvisited subtree with a tree of a different shape, breaking the shape preserving property. 

Any shape preserving MTT is of linear size increase (LSI), so according to~\cite{DBLP:journals/siamcomp/EngelfrietM03} it is effectively equivalent to the composition of a DBQREL with an MTT which does not erase nor copy its parameters, and has a finite set of state sequences. 

For the rest of the section we assume that $M$ is visiting every node of its input, that no parameter is ever deleted or copied, and there is a bound on the number of state sequences. 
And we have a relabeling $R$ such that $R \comp M$ is shape preserving. 

\subsubsection{Proof of Lemma~\ref{lem:delay_bound}:}
Since $M$ is deterministic and there is a bound on the number of state sequences, one can build an MTT $M'$ equivalent to $M$ where states of $M'$ are of the form $(q,S)$ where $q$ is a state from $M$ and $S$ is the state sequence at the input nodes where state $(q,S)$ is called. 

We now prove that if two inner nodes of any input trees have the same label (from the relabeling $R$) and same state sequence, then they have the same delay $\Delta$. 

Let $t, t' \in T_\Sigma$, $u \in V(t)$ and $u' \in V(t')$ such that the nodes at path $u$ in $t$ and path $u'$ in $t'$ have the same label $r$ from the relabeling $R$ and state sequence $q_1, \dots , q_n$. Then the state sequence at path $u$ in $t_0 = t[u \leftarrow t'(u')]$ is also $q_1, \dots , q_n$. 
Then $t_0[u \leftarrow X] = t[u \leftarrow X]$ and: 
\begin{align*}
	|M(t_0[u \leftarrow X])| &= |M(t[u \leftarrow X])| \\
	(\sum_{i=1}^{n} |q_i(t_0(u))| ) - |M(t_0)| &= (\sum_{i=1}^{n} |q_i(t(u))| ) - |M(t)| \\
	\Delta(t_0,u) + |t_0(u)| - |M(t_0)| &= \Delta(t,u) + |t(u)| - |M(t)| \\
	\Delta(t_0,u) + |t_0(u)| - |t_0| &= \Delta(t,u) + |t(u)| - |t| \\
	\Delta(t_0,u) - |t_0[u \leftarrow X]| &= \Delta(t,u) - |t[u \leftarrow X]| \\
	\Delta(t_0,u) &= \Delta(t,u) \\
	\Delta(t',u') &= \Delta(t,u)
\end{align*}
So $\Delta(t,u)$ only depends on the label from the $R$ and the state sequence of the node at path $u$ in $t$. 

Now we show how to compute the delay with a tree relabeling. Let $t \in T_\Sigma$ and $u \in V(t)$ with $t(u) = \sigma(t_1, \dots, t_k)$ for some input symbol $\sigma$ of arity $k$ and trees $t_1, \dots, t_k$. Let $q_1, \dots , q_n$ be the state sequence on $t$ at path $u$, and let $\rhs_1, \dots,  \rhs_n$ be the right-hand sides of the rules of $M$ for $q_1, \dots , q_n$ to the node at path $u$ in $t$. Noting $S(uj)$ the state sequence at path $uj$ in $t$ for $j \leq k$:
\begin{align*}
	\Delta(t,u) &= \sum_{i=1}^{n} |q_i(t(u))| - |t(u)| \\
	&= \sum_{i=1}^{n} |\rhs_i| + \sum_{j=1}^{k} \sum_{q \in S(uj)} |q(t_j)| - |t(u)| \\
	&= \sum_{i=1}^{n} |\rhs_i| + \sum_{j=1}^{k} \sum_{q \in S(uj)} |q(t_j)| - 1 - \sum_{j=1}^{k} |t_j| \\
	&= \sum_{i=1}^{n} |\rhs_i| - 1 + 
	\sum_{j=1}^{k} (\sum_{q \in S(uj)} |q(t_j)| - |t_j|) \\
	&= \sum_{i=1}^{n} |\rhs_i| - 1 + \sum_{j=1}^{k} \Delta(t,uj) 
\end{align*}
So $\Delta(t,u)$ can be effectively computed from the state sequence at path $u$ in $t$ (given by $R_s$), the rules of $M$ and the delays $\Delta(t,uj)$ at the child nodes of the node at path $u$ in $t$. So the delay can be computed by a relabeling, on an input already labeled with the state sequences (by the relabeling $R_s$). The composition of those relabelings is also effectively computable, which gives us a relabeling computing both the state sequences and the delays at each node. 

Conjointly, the fact that $\Delta(t,u)$ only depends on the label from $R$ and the state sequence, and the formula for computing the delay $\Delta(t,u)$ from the rules of $M$ and other delays, imply a bound $b = n.\mathsf{max}(\rhs).(k+1)^{|R_s|}$ on $|\Delta(t,u)|$, where $n$ is the maximum size of a state sequence, $\mathsf{max}(\rhs)$ is the maximum size of a right-hand size of a rule in $M$, $k$ is the maximum arity of input symbols and $|R_s|$ is the size of the relabeling $R_s$ (i.e.\ the number of labels in $R$ times the number of different state sequences). \qed

\subsubsection{Predicting outputs of state calls}
In order to produce output earlier than $M$, we change the relabeling so as to predict parts of those outputs, so we can produce parts of output earlier. 

We first define the notion of prefix tree.

\begin{definition}
	For all tree $t \in T_{\Delta}(Y)$ we say that $t' \in T_{\Delta}(X \cup Y)$ is a prefix tree of $t$ if there exists $n \in \N$ and $t_1, \dots, t_n \in T_{\Delta}(Y)$ such that $t'[X_i \leftarrow t_i]_{i \leq n}$ and for all $i \leq n$, $X_i$ occurs once in $t'$. 
\end{definition}

For each state call we will predict a finite prefix tree of its output, but containing at least as many nodes as the bound $b \in \N$ on the delay. 
For all constant $b \in \N$, state $q \in Q$ and input tree $t$, we denote by $q_{\uparrow b}(t)$ the tree $t' \in T_{\Delta}(X \cup Y)$ obtained from $q(t)$ by replacing each non-parameter subtree at depth $b+1$ in $q(t)$ with a fresh variable in $X$. Note that $q_{\uparrow b}(t)$ can also be described as the largest prefix tree of $q(t)$ of height $b$ with the most parameters $y \in Y$. 
Then if $|q_{\uparrow b}(t)| < b$ then $q_{\uparrow b}(t) = q(t)$. 

\begin{lemma}\label{lem:delay_look_ahead}
	For any constant $b \in \N$, we can construct 
	a deterministic top-downrelabeling
	on the input trees of $M$ which labels each subtree $t$ with the tuple $(q_{\uparrow b}(t))_{q \in Q}$. 
\end{lemma}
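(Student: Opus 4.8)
The plan is to show that the map $t\mapsto (q_{\uparrow b}(t))_{q\in Q}$ is a finite-state function of the subtree $t$, and then to realize it as a relabeling. I would attach to each input node, as its label, the entire tuple $(q_{\uparrow b}(t))_{q\in Q}$, where $t$ is the subtree rooted at that node, and establish (i) that this tuple ranges over a finite set and (ii) that it can be computed by a single bottom-up pass from the tuples already computed at the children. Because the label depends only on the subtree below the node, a deterministic bottom-up finite-state relabeling computes it directly; equivalently this is a deterministic top-down relabeling with regular look-ahead, which is the relabeling required here.

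First I would establish finiteness of the value set. Each $q_{\uparrow b}(t)$ is a tree of bounded depth over the finite output alphabet $\Delta$ whose leaves are either parameters from some $Y_m$, with $m$ at most the maximal rank of a state, or fresh truncation markers from $X$ standing for the subtrees that were cut at depth $b+1$. Fixing a canonical left-to-right naming $X_1,X_2,\dots$ of these markers, a tree of depth at most $b$ over a finite ranked alphabet has boundedly many nodes, so there are only finitely many such prefix trees. Hence the tuple ranges over a finite set $\Gamma$, which becomes the output-label alphabet of the relabeling.

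The heart of the argument is the bottom-up transition. For $t=\sigma(t_1,\dots,t_k)$ the full output $M_q(t)$ is obtained from $\rhs_M(q,\sigma)$ by replacing each state call $\< q',x_i>$ with $M_{q'}(t_i)$ and substituting the parameter arguments. I would prove a \emph{depth-monotonicity} property of this second-order substitution: every node of $M_q(t)$ sits at output depth at least its depth inside whichever constituent $M_{q'}(t_i)$ produced it. It follows that the prefix of $M_q(t)$ down to depth $b$ is determined by $\rhs_M(q,\sigma)$ and the child prefixes $q'_{\uparrow b}(t_i)$ alone: one substitutes the child prefixes into $\rhs_M(q,\sigma)$ (treating their truncation markers as opaque subtrees), carries out the parameter substitution, and truncates at depth $b$, obtaining $q_{\uparrow b}(t)$ after canonical renaming. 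This defines the transition of the relabeling, for each input symbol, as a function of the child labels; the base case for nullary $\sigma$ is read off directly as $(M_q(\sigma))_{q\in Q}$.

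I expect the main obstacle to be the bookkeeping around parameters, because the substitution is second-order and the argument subtrees themselves contain state calls that expand further. I would control this using the earlier normalization that makes $M$ \emph{nonerasing} and \emph{nondeleting in the parameters}: each parameter then occurs exactly once, so substitution causes no copying blow-up, and every state call emits at least one output symbol. Combined with depth-monotonicity, this guarantees that only the depth-$\le b$ prefixes of the children ever influence the depth-$b$ prefix of $M_q(t)$, since any parameter occurrence or subtree cut at a child's truncation frontier necessarily lands beyond depth $b$ in the output and may be discarded. Making this frontier argument precise, rather than the finiteness or the realization step, is the delicate part of the proof.
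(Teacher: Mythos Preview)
Your proposal is correct and takes the same approach as the paper: the paper's proof is a one-line claim that for $t=\sigma(t_1,\dots,t_n)$ each $q_{\uparrow b}(t)$ is determined by the rules of $M$ together with the child values $q'_{\uparrow b}(t_i)$, which yields a bottom-up automaton (equivalently, a top-down relabeling with look-ahead). Your write-up simply makes explicit the finiteness of the label set and the depth-monotonicity justification that the paper leaves implicit; the reliance on the nonerasing/nondeleting normalization you anticipate is not actually needed for this particular lemma, since parameter substitution can only increase output depth.
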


In our example $M_1$, the bound on the delay is $b=1$. 
For state $q_b$ we have: \\ 
$q_{b\uparrow 1}(t)= 
\begin{cases*}
	b(x) & if $t$ contains at least one $b$ \\
	a(x) & if $t$ contains no $b$ \\
\end{cases*}$ \\
where $x$ is a variable in $X$. 
Each of these cases corresponds to $t$ being in a regular tree language, which can be computed by a relabeling.

\subsubsection{Proof of Lemma~\ref{lem:delay_look_ahead}:}
For all input tree $t= \sigma(t_1, \dots, t_n) \in T_{\Sigma}$ and for each state $q$ of $M$, one can deduce $q_{\uparrow b}(t)$ directly from the rules of $M$ and from the values of $q'_{\uparrow b}(t_i)$ for all $i \leq n$ and all states $q'$ of $M$. This gives the rules for the bottom-up tree automaton. \qed

\subsubsection{Proof of theorem~\ref{the:mtt_norm}:}
First we use Lemma~\ref{lem:delay_bound} on $M$ to get a tree relabeling $R$ giving the state sequences and the delay at each input node, then Lemma~\ref{lem:delay_look_ahead} with the bound $b$ from Lemma~\ref{lem:delay_bound} to get a new relabeling (we take the product relabeling of this relabeling and of the previous relabeling). So for each input node we have its label from $R$, its state sequence, its delay and a part of its output (a part including all nodes of depth $\leq b$). 

Each pair $(r, S)$ where $r$ is a label from $R$ and $S$ is a state sequence, is associated with a delay $\delta \in \Z$. For $M$ to be a relabeling we need to replace state sequences that have $\delta \neq 0$ with ones that have $\delta = 0$. There are two big cases: $\delta > 0$ and $\delta < 0$. We start by removing the states sequences with $\delta > 0$.

\paragraph{Modify a rule in $M$ to remove strictly positive delays}
Let $\ell$ be a look-ahead state of $M$, $S = q_1, \dots, q_n$ a state sequence of $M$ and $\delta \in \Z$ the corresponding delay. We will modify the rules for the states $q_1, \dots, q_n$ on input nodes of the form $\sigma(t_1, \dots , t_k)$ with look-ahead $\ell$ and state sequence $S$ so that the delays $\delta_1, \dots, \delta_k$ of the subtrees $t_1, \dots, x_k$ are null or negative: $\forall i \leq k$: $\delta_i \leq 0$. 

For each subtree $t_i$ such that $\delta_i > 0$ we note $S_i = q'_1, \dots, q'_m$ the state sequence at node $t_i$. If $\sum_{j=1}^{m}|q'_{j \uparrow b}(t_i)| < b$ then $\sum_{j=1}^{m}|q'_{j \uparrow b}(t_i)| = \sum_{j=1}^{m}|q'_{j}(t_i)| = \delta_i + |t_i| \geq \delta_i$, otherwise $\sum_{j=1}^{m}|q'_{j \uparrow b}(t_i)| \geq b \geq \delta_i$ by Lemma~\ref{lem:delay_bound}. In either case $\sum_{j=1}^{m}|q'_{j \uparrow b}(t_i)| \geq \delta_i$, let $t'_1, \dots, t'_m$ be prefixes of $q'_{1 \uparrow b}(t_i), \dots, q'_{m \uparrow b}(t_i)$ respectively, such that $\sum_{j=1}^{m}|t'_j| = \delta_i$. For each $j \leq m$ and $X_c$ in $t'_j$ we note $p_c$ the path to $X_c$ in $t'_j$ and we create a helper state $\<q'_j,p_c>$ which, on input $t_i$ will compute the subtree of $q'_j(t_i)$ at path $p_c$. 


In the right-hand sides of the rules for $q_1, \dots, q_n$ on $\sigma(t_1, \dots , t_k)$ (as mentioned above), we replace each call $q'_j(x_i)$ with $t'_j[X_c \leftarrow \<q'_j,p_c>(x_i)]$ where $\<q'_j,p_c>$ is a helper state. The state sequence $S_i = q'_1, \dots, q'_m$ on subtree $t_i$ is then replaced with $S'_i = \<q'_1,p_{1,1}>, \dots, \<q'_m,p_{m,d_m}>$ and the delay at $t_i$ is then $\sum_{j=1}^{k} \sum_{c=1}^{d_j} |\<q'_j,p_{j,d_j}>(t_i)| - |t_i| = \sum_{j=1}^{m}|q'_{j}(t_i)| - \sum_{j=1}^{m}|t'_j| - |t_i| = \delta_i - \sum_{j=1}^{m}|t'_j| = 0$ (where $d_j$ is the number of variables $X_c$ in $t'_j$). 

We apply this construction to all the rules in $M$, so all state calls appearing on the right side of rules have a not strictly positive delay. Note that at the root of an input tree $t \in T_\Sigma$, the state sequence is $q_0$ and the delay is also $\Delta(t,\varepsilon) = |q_0(t)| - |t| = 0$.

\begin{claim}
	If helper states are such that $\<q'_j,p_c>(t_i) = (q'_j(t_i))(p_c)$, then the new rules effectively compute the same output as with $M$. 
\end{claim}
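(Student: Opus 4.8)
The plan is to separate a local equality concerning a single rewritten state call from the global induction that propagates it to the whole translation. Write $\widehat M$ for the transducer obtained from $M$ by the modification described above, and $\widehat M_q$ for its $q$-translation. The goal is to prove $\widehat M_q(t) = M_q(t)$ for every original state $q$ and every input tree $t$, which for $q=q_0$ gives $\widehat M(t)=M(t)$.

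First I would establish the local equality: for every subtree $t_i$ and every rewritten call, $t'_j[X_c \leftarrow \<q'_j,p_c>(t_i)]_c = M_{q'_j}(t_i)$. By construction $t'_j$ is a prefix tree of $M_{q'_j}(t_i)$ whose unique occurrences of the variables $X_c$ are exactly at the cut paths $p_c$; unfolding the definition of prefix tree therefore gives $M_{q'_j}(t_i) = t'_j[X_c \leftarrow M_{q'_j}(t_i)/p_c]_c$. The hypothesis of the claim, $\<q'_j,p_c>(t_i) = M_{q'_j}(t_i)/p_c$, lets me substitute each cut subtree by the corresponding helper output, and since each $X_c$ occurs exactly once no subtree is duplicated or lost; the two substitutions agree pointwise, yielding the desired equality.

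With this invariant in hand I would argue by induction on the input tree $t=\sigma(t_1,\dots,t_k)$ that $\widehat M_q(t)=M_q(t)$. The only rules of $\widehat M$ that differ from those of $M$ are the ones in which some call $q'_j(x_i)$ has been replaced by the block $t'_j[X_c\leftarrow\<q'_j,p_c>(x_i)]_c$; every remaining output symbol of each right-hand side is untouched. In the inductive step I would apply the modified rule, evaluate the ordinary and helper state calls on the subtrees $t_i$ using the induction hypothesis, and then invoke the local equality to see that each rewritten block derives to exactly the tree $M_{q'_j}(t_i)$ that the original call contributed under $M$. Since the surrounding output skeleton is identical, both right-hand sides derive to the same tree, closing the induction.

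The step that requires genuine care — and the main obstacle — is the interaction with the parameter semantics of MTTs. Both the prefix tree $t'_j$ and the helper outputs may contain parameters $y\in Y$, while the derivation relation substitutes the actual parameter arguments into a right-hand side only after the rule has been applied. I would resolve this by noting that the reconstitution substitution $[X_c\leftarrow\cdot]_c$ acts on the variables $X$, which are disjoint from the parameters $Y$, so it commutes with the outer parameter substitution $[y_l\leftarrow t_l]_l$; consequently the local equality is preserved once the parameters are instantiated, and the helper states receive the same parameter arguments as the call they replace. Checking this commutation, and confirming that the cut paths $p_c$ index single, non-parameter occurrences so that the prefix-tree decomposition is exact, is the only delicate bookkeeping; the remainder is a routine structural induction.
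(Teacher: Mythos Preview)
The paper does not prove this claim; it is stated as an evident intermediate step inside the proof of Theorem~\ref{the:mtt_norm} and the text proceeds directly to constructing the helper-state rules. Your proposal therefore supplies an argument where the paper offers none, and the plan you lay out is the natural one.

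The local equality is exactly the core: since each $t'_j$ is a prefix tree of $M_{q'_j}(t_i)$ with its variables $X_c$ sitting at the cut paths $p_c$, the decomposition $M_{q'_j}(t_i)=t'_j[X_c\leftarrow M_{q'_j}(t_i)/p_c]_c$ is immediate, and the claim's hypothesis lets you substitute the helper outputs for the subtrees at $p_c$. The structural induction then propagates this through every rule application; unmodified original calls on children with $\delta_i\le 0$ fall to the induction hypothesis, helper calls to the claim's hypothesis.

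Your flagging of the parameter handling as the only genuine subtlety is accurate, and so is your resolution. Concretely, the informal replacement ``$q'_j(x_i)\mapsto t'_j[X_c\leftarrow\<q'_j,p_c>(x_i)]$'' must be read so that any $y_l$ occurring in the prefix tree $t'_j$ is instantiated with the corresponding argument $\zeta_l$ of the original call, and each helper call receives the same argument tuple $(\zeta_1,\dots,\zeta_m)$ that $q'_j$ had. With that reading, the commutation you invoke between the $X$-substitution and the outer parameter substitution is exactly what makes the local equality survive instantiation, and the induction closes as you describe.
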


\paragraph{Create new rules for helper states}
We now create rules for the helper states. For each helper state $\<q,p>$ and rule of the form $q(\sigma(x_1, \dots, x_k), y_1, \dots, y_m) \to \rhs$, we add the rule 
$\<q,p>(\sigma(x_1, \dots, x_k), y_1, \dots, y_m) \to (\rhs)_{\downarrow p}$ where the $(~)_{\downarrow p}$ operation is defined by induction as:
\begin{itemize}
	\item $(t)_{\downarrow \varepsilon} = t$
	\item $(\sigma(t_1, \dots, t_n))_{\downarrow ip} = t_i$ for $i \leq n$
	\item $(q'(x_i,t_1, \dots, t_m))_{\downarrow p_1 p_2} = (t_j)_{\downarrow p_2}$ for $q' \in Q$ if $q'_{\uparrow b}(x_i)/{p_1}=y_j$ with $j \leq m$
	\item $(q'(x_i,t_1, \dots, t_m))_{\downarrow p} = \<q',p>(x_i,t_1, \dots, t_n)$ for $q' \in Q$ if $q'_{\uparrow b}(x_i)/p=t'$ with $t' \notin Y$
\end{itemize}
Note that other cases are impossible (from the way the right-hand side of rules and the helper states were built). 

We can show by induction that these rules ensure that $\<q,p>(t) = (q(t))/p$ for all helper state $\<q,p>$ and input tree $t$. 

\paragraph{Modify a rule in $M$ to remove negative delays}
Let $\ell$ be a look-ahead state of $M$ and $S = q_1, \dots, q_n$ a state sequence of $M$ such that the pair $(\ell,S)$ is accessible in $M$ and the corresponding delay is $\delta = 0$. We will modify the rules for the states $q_1, \dots, q_n$ on input nodes of the form $t=\sigma(t_1, \dots , t_k)$ with look-ahead $\ell$ and state sequence $S$ so that the delays $\delta_1, \dots, \delta_k$ of the subtrees $t_1, \dots, t_k$ become $0$. Note that such a rule must exist since the initial state has delay $0$. 

Note that since $\delta = 0$ we have: $0 = \sum_{i=1}^{n} |q_i(t)| - |t| = \sum_{i=1}^{n} |\rhs_i| + \sum_{j=1}^{k} (\delta_j + |t_j|) - |t| = \sum_{i=1}^{n} |\rhs_i| + \sum_{j=1}^{k} \delta_j - 1$, where $\rhs_i$ is the right-hand side of the rule for $q_i$ applied at the root of $t$. So $\sum_{i=1}^{n} |\rhs_i| = 1 - \sum_{j=1}^{k} \delta_j$. 
We add helper states of the form $\< a, c>$ where $a$ is an output tree constant of arity $g$, $c \in \N$ is such that $c \leq b$ and the helper state $\< a, c>$ takes $g$ parameters. Such helper states are used to delay production of output, for each such state of the form $\<a,c>$ we add the rule $\<a,c>(x,y_1, \dots, y_g) \to a(y_1, \dots, y_g)$. 
The constant $c$ is used to avoid having several times the same state in a state sequence, this allows to produce one output tree node $a$ without producing all of them at the same time. Note that for all helper state $\<a,c>$ and input tree $t$ we have $|\<a,c>(t)| = |a(y_1, \dots, y_g)| = 1$. 

For each subtree $t_j$ such that $\delta_j < 0$ we select $|\delta_j|$ nodes labeled with an output tree constant from the right-hand sides of the rules for $q_i$ at the root of $t$, and we replace each such node labeled $a$ with a state call $\<a,c>(x_j)$ (note that this is possible because the arity of state call $\<a,c>(x_j)$ is the same as that of nodes labeled $a$). If two such nodes have the same label $a$ then we use different helper states $\<a,c_1>$ and $\<a,c_2>$ with $c_1 \neq c_2$ (this is possible because $|\delta_j| \leq b$ and there are $b$ different helper states of the form $\<a,c>$). 

Noting $S_j = q'_1, \dots, q'_m$ the former state sequence at node $t_j$, with the added helper states we now have the state sequence $S'_j = q'_1, \dots, q'_m, h_1, \dots, h_{\delta_j}$ where $h_1, \dots, h_{\delta_j}$ are new helper states. The delay of this state sequence is then $\delta_j + |\delta_j| = 0$. 

\paragraph{Iterate the construction to remove all negative delays}
The construction described above only works when the delay at the current node is $0$, it adjusts the delay of the child nodes to $0$. This construction needs to be iterated as new look-ahead/state sequence pairs $(\ell,S)$ with delay $0$ become accessible. This iteration terminates when all accessible pairs $(\ell,S)$ have delay $0$, it has to terminate because there is a bound $b$ on the number of newly added helper states in any state sequence. 

When the iteration terminates, all accessible pairs have delay $0$. 
\qed


\end{document}